\newtheorem{theorem}{Theorem}
\newtheorem{lemma}{Lemma}
\newtheorem{corollary}{Corollary}
\newtheorem{example}{Example}
\newtheorem{definition}{Definition}
\newtheorem{proposition}{Proposition}
\newcommand{\ie}{{\it i.e.}}
\newcommand{\Xj}[1]{X_{#1}} 
\newcommand{\Q}[2]{Q^{[#1, #2]}} 
\newcommand{\A}[2]{A^{[#1,#2]}} 
\newcommand{\Hp}[1]{H\left(#1\right)} 
\newcommand{\Hc}[2]{H\left(#1 \mid #2\right)} 
\newcommand{\I}[2]{I\left(#1\: ; \: #2\right)} 
\newcommand{\GF}[1]{\mathbb{F}_{#1}} 
\newcommand\blfootnote[1]{%
  \begingroup
  \renewcommand\thefootnote{}\footnote{#1}%
  \addtocounter{footnote}{-1}%
  \endgroup
}
\title{\LARGE \bf
Private Information Retrieval with Side Information}
\begin{document}



%
\author{Swanand Kadhe, Brenden Garcia, Anoosheh Heidarzadeh, Salim El Rouayheb,  and
 Alex Sprintson}



\maketitle

\begin{abstract}
We study  the problem of Private Information Retrieval (PIR) in the presence of prior side information. The problem setup includes a database of  $K$ independent messages possibly replicated on several servers, and a user that needs to retrieve one of these messages. In addition, the user has some prior side information in the form of a subset of  $M$ messages, not containing the desired message and unknown to the servers. This problem is motivated by practical settings in which the user can obtain side information  opportunistically from other users or has   previously downloaded some messages using classical PIR schemes. The objective of the user is to retrieve the required message without revealing its identity while minimizing the amount of data downloaded from the servers.  

We focus on achieving information-theoretic privacy in  two  scenarios: (i) the user wants to protect jointly its demand and side information; (ii) the user wants to protect only the information  about its demand, but not the side information. To highlight the role of side information, we focus first on the case of a single server (single database). In the first scenario, we prove that the minimum  download cost is $K-M$ messages, and in the second scenario it is  $\lceil \frac{K}{M+1}\rceil$ messages, which should be compared to $K$ messages, the minimum download cost in the case of no side information. Then, we extend  some of our  results to the case of the database replicated on  multiple servers.  
Our proof techniques relate  PIR  with side information to the    index coding problem. We leverage this connection  to prove converse results, as well as to design  achievability schemes. 
\end{abstract}

\blfootnote{Swanand Kadhe, Brenden Garcia, Anoosheh Heidarzadeh, and Alex Sprintson are with the Department of Electrical and Computer Engineering at Texas A\&M University, USA; emails:\{swanand.kadhe,brendengarcia,anoosheh,spalex\}@tamu.edu. 

Salim El Rouayheb is with ECE Department  at Rutgers University, email: sye8@soe.rutgers.edu. Part of this work was done while he was with the ECE department at the Illinois Institute of Technology.

 The work of S. El Rouayheb  was supported in part by NSF Grant CCF 1652867 and ARL Grant W911NF-17-1-0032.}

\section{Introduction}
\label{sec:intro}

Consider the following Private Information Retrieval (PIR) setting first studied in \cite{Chor:PIR1995,chor1998private}:  a user  wishes to privately download  a message belonging to a database with copies stored on a single or   multiple remote servers, without revealing which message it is requesting. 
In a straightforward PIR scheme, the user would  download all the messages in  the database. This scheme may not be feasible due to the its high communication cost. In the case of a single server (i.e., there is only one copy of the database),  it can be shown that downloading  the whole database is necessary to achieve perfect privacy in an information-theoretic sense. If computational (cryptographic) privacy is desired, then PIR schemes with lower communication overhead do exist \cite{kushilevitz1997replication, cPIRPoly}, but they do not offer information-theoretic privacy guarantees and usually have high computational complexity. In contrast, in this paper, we design and analyze schemes that achieve information-theoretic privacy.

Interestingly, more efficient PIR schemes, achieving perfect  information-theoretic privacy,  can be constructed when the database is replicated on multiple servers with restriction on the servers' collusion.
 This replication-based model  has been the one that is predominantly studied in the PIR literature, with  breakthrough results in the past few years 
 (e.g., \cite{sun2016capacitynoncol, sun2016capacity, yekhanin2010private, beimel2001information, beimel2002breaking,gasarch2004survey}). 
Recently, there has been a renewed  interest in  PIR for the case in which the data is  stored on the servers using erasure codes, which result in  better storage overhead compared to the traditional replication techniques \cite{shah2014one, chan2014private, tajeddine2016private, extended, banawan2016capacity, fazeli2015pir, blackburn2016pir, freij2016private}.


%

%

In this paper, we study the  PIR problem when  the user has  prior side information about the database. In particular, we assume that the user already has a random subset of the database messages that is unknown to the server(s)\footnote{We assume that this side information subset does not contain the desired message. Otherwise, the problem is degenerate.}. This  side information could have been obtained in several ways. For example, the user could have obtained  these messages opportunistically from other users in its network,  overheard them from a wireless broadcast channel, or downloaded them previously through classical PIR schemes. The next example illustrates how this side information could be leveraged to devise efficient  PIR schemes. In particular, 
the following example shows that perfect information-theoretic privacy can be achieved with  a single server case without having to download the entire database.

\begin{example}[single-server  PIR with side information]\label{ex:intro}
Consider a remote server that has a database formed of  an even number of binary messages  denoted by $X_1,\dots,X_K$ of equal length.  A user wants to download one of these messages from the server without revealing to the server which one. Moreover, the user has one  message as side information chosen uniformly at random among all the other messages and unknown to the server. We propose two PIR schemes that leverage the side information  and compare them to the straightforward scheme that downloads all the $K$ messages.
\begin{enumerate}

\item{\em Maximum Distance Separable (MDS) PIR scheme.} This scheme downloads $K-1$ messages. The user sends to the server the number of messages in its side information  (one in this example). The server responds by coding all the messages using a $(2K-1,K)$ systematic MDS code and 
sending  the $K-1$ parity symbols of the code. Therefore, the user can always decode all the messages using its side information and the coded messages received from the server.

\item{\em Partition and Code PIR scheme.} This scheme downloads $K/2$ messages. Suppose the message the user wants is $X_W$ and the one in its side information is $X_S$ for some $W,S \in\{1,\dots,K\}$, $W\neq S$. The user chooses  a random partition of $\{1,\dots,K\}$ formed only of sets of size $2$ and containing $\{W,S\}$, and sends indices of all pairs in the partition to the server. The server sends back the XOR of the messages indexed by each subset. For example, suppose  $W=1$ and  $S=2$, i.e, the user wants $X_1$ and has $X_2$ as side information.  The user chooses a random partition $\{\{i_1,i_2\},\{i_3,i_4\},\dots,\{i_{K-1},i_K\}\}$ and sends it to the server. The partition is chosen such that $\{1,2\}$ is a part of the partition (i.e., $i_j=1$ and $i_{j+1}=2$ for some $j\in\{1,3,\dots,K-1\}$. 
The server responds with $X_{i_1}+X_{i_2},\dots, X_{i_{K-1}}+X_{i_K}$. The user can always decode because it always receives $X_W+X_S$. Intuitively, perfect privacy is achieved here because the index of the desired message can be in any subset of the partition, and in each subset it could be either one of messages in the subset, since the server does not know the index of the side information. \hfill\rule{1.3ex}{1.3ex}
 \end{enumerate}
 \end{example}
We will show later that the two schemes above are optimal but achieve different privacy constraints. The MDS PIR scheme protects both the indices of the desired message and that of the side information, whereas the Partition and Code scheme protects only the former.

\subsection{Our Contributions}
\label{sec:contributions}
We consider the PIR with side information problem as illustrated in  Example~\ref{ex:intro}. A user wishes to download a message from a set of $K$ messages that belong to a database stored on a single remote server or replicated on several {\em non-colluding} servers. 
The user has  a random subset of $M$ messages as side information. The identity of the messages in this subset is unknown to the server. We focus on PIR schemes that achieve information-theoretic privacy. The figure of merit that we consider for  the PIR schemes is the download rate, which dominates the total communication rate (download plus upload) for large message sizes. Under this setting, we distinguish between two types of privacy constraints: (i) hiding both the  identity of the requested message and that of  the side information from the server; and (ii) hiding only the identity of the desired message. The latter, and  less stringent, privacy constraint is justified when the side information is obtained opportunistically given that  it is random and assumed to be independent of the user's request. In the case in which  the side information messages were  obtained previously through PIR, this constraint implies that the identity of these messages may be leaked to the server(s). However, this type of privacy can still be relevant when privacy is only desired for a certain duration of time, i.e., when the user is  ambivalent about protecting the identity of messages downloaded as long as it has happened far enough in the past.  

First, we focus on the  single server scenario as the canonical case to understand the role of side information in PIR. We  characterize the   capacity of PIR with side information in the case of a single server for the two privacy constraints mentioned above. We show that when protecting both the side information and the request, the minimum download rate\footnote{The download rate is defined as the inverse of the normalized download cost.} for PIR is $(K-M)^{-1}$, and this can be achieved by a generalization of the MDS PIR scheme in Example~\ref{ex:intro}. Moreover, we show that when only protecting the request, the minimum download rate is $\lceil \frac{K}{M+1}\rceil^{-1}$, and this can be achieved by a generalization of the Partition and Code PIR scheme in Example~\ref{ex:intro}. We present achievability and converse proofs that use among others connections to index coding. Second, we tackle the case of $N>1$  servers storing replicas of the database. In this case, when $(M+1)\mid K$, we devise a PIR scheme with side information that achieves a download rate equal to  $$\left(1 + \frac{1}{N} + \cdots + \frac{1}{N^{\frac{K}{M+1}-1}}\right)^{-1}.$$ Our scheme for the multiple servers uses ideas from the single server scheme in conjunction with the scheme due to Sun and Jafar \cite{sun2016capacitynoncol} for settings with no side information.



\subsection{Related Work} \label{sec:related-work}
The initial work on PIR in \cite{Chor:PIR1995,chor1998private} and in the literature that followed  focused on  designing PIR schemes for replicated data that have   efficient  communication cost accounting  for  both the size of the user queries and the servers' responses. PIR schemes with communication cost that is subpolynomial in the number of messages were devised in \cite{beimel2002breaking} and \cite{dvir20162}. Information-theoretic bounds on the download rate (servers' responses) and achievable schemes were  devised in \cite{sun2016capacitynoncol} and \cite{sun2016capacity}. 
Recently, there has been   a growing body of work studying   PIR for coded data motivated by lower overhead of codes
 \cite{shah2014one, chan2014private, tajeddine2016private, extended, banawan2016capacity, fazeli2015pir, blackburn2016pir, freij2016private,tajeddine2017private1,tajeddine2017private2}.

The role of side information in improving PIR schemes has so far received little attention in the literature. The closest work to ours is the concurrent work  of Tandon \cite{Tandon2017} in which the capacity of  {cache-aided PIR} is characterized. The main difference with the model in \cite{Tandon2017} is our assumption that the indices of the side information messages are unknown to  the servers, as is the case in the scenarios mentioned above. This lack of knowledge at the servers can be leveraged to reduce the communication cost of PIR even in the case of a single server. We also restrict our study to side information that is subset of the data, whereas the cache model in \cite{Tandon2017} allows any function of the data.  Another related line of work is that of {private broadcasting} by Karmoose et al.\cite{Karmoose2017}, which considers the index coding setting with multiple users with side information and a single server. Here too, the server does know the content of the side information at the users. Moreover, the privacy constraint  is to protect the request and side information of a user from the other users through a carefully designed encoding matrix. In contrast, the goal of our scheme is to protect the identity of the requested data from the server. We also note that the case in which the side information is unknown at the server was also considered in the index coding literature under the name of  {blind index coding} \cite{kao2017blind}. However, the goal there was to minimize the broadcast rate without privacy constraints. 

\section{Problem Formulation and Main Results}
\label{sec:basics}

For a positive integer $K$, denote $\{1,\dots,K\}$ by $[K]$. 
For a set $\{X_1,\dots,X_K\}$ and a subset $S\subset {[K]}$, let \mbox{$X_S = \{X_j : j\in S\}$}. For a subset $S \subset [K]$, let $\mathbf{1}_S$ denote the characteristic vector of the set $S$, which is a binary vector of length $K$ such that, for all $j\in[K]$, its $j$-th entry is $1$ if $j\in S$, otherwise it is $0$. Let $\GF{q}$ denote the finite field of order 
$q$. 

We assume that the database consists of a set of $K$ messages $X = \{\Xj{1}, \dots,\Xj{K}\}$, with each message being independently and uniformly distributed over $\GF{2^t}$ (i.e., each message $X_j$ is $t$ bits long). 
We also assume that there are $N\geq 1$ non-colluding servers 
which store identical copies of the $K$ messages.


A user is interested in downloading a message $X_W$ for some $W\in [K]$. We refer to $W$ as the \emph{demand index} and $X_W$ as the demand. 
The user has the knowledge of a subset $X_S$ of the messages 
for some $S\subset [K]$, $|S| = M$, $M<K$. 
We refer to $S$ as the \emph{side information index set} and $X_S$ 
as the \emph{side information}. 

Let $\mathbf{W}$ and $\mathbf{S}$ denote the random variables corresponding to the demand index and the side information index set. 
We restrict our attention to the class of distributions $p_{\mathbf{W}}(\cdot)$ of $\mathbf{W}$ such that $p_{\mathbf{W}}(W) > 0$ for every $W\in[K]$.

An important distribution of $\mathbf{W}$ and $\mathbf{S}$ that we focus on in this work is as follows. Let the demand index $W$ be distributed uniformly over $[K]$, i.e., 
\begin{equation}
\label{eq:WantSetDist}
p_{\mathbf{W}}(W) = \frac{1}{K},
\end{equation} for all $W\in [K]$.
Further, let the side information index set $S$ have the following conditional distribution given $W$: 
\begin{equation}\label{eq:SideInfoDist}
p_{\mathbf{S}|\mathbf{W}}(S|W) = \left\{
\begin{array}{ll}
\frac{1}{\binom{K-1}{M}}, & \textrm{if}\:\:W\not\in S \:\: \textrm{and}\:\: |S| = M,\\
0, & \textrm{otherwise}.\\
\end{array}
\right.
\end{equation}
We note that this implies  the following joint distribution on $(\mathbf{W},\mathbf{S})$: 
\begin{equation}\label{eq:dist}
p_{\mathbf{W},\mathbf{S}}(W,S) = \left\{
\begin{array}{ll}
\frac{1}{(K-M)\binom{K}{M}}, & W\not\in S,|S| = M,\\
0, & \textrm{otherwise}.\\
\end{array}
\right.
\end{equation}
We assume that the servers do not know the side information realization at the user and only know the  {\it a priori} distributions 
$p_{\mathbf{W}}(W)$
 and $p_{\mathbf{S}|\mathbf{W}}(S|W)$.


To download the message $\Xj{W}$ given the side information $\Xj{S}$, the user sends a query $\Q{W}{S}_j$ from an alphabet $\mathcal{Q}$ to the $j$-th server. 
%
The $j$-th server responds to the query it receives with an answer $\A{W}{S}_j$ over an alphabet $\mathcal{A}$. We refer to the set of queries and  answers as the {\it PIR with side information (PIR-SI) scheme}. Our focus in this paper is on non-interactive  (single round) schemes. Further, we assume that the servers do not collude with each other. A PIR-SI scheme should satisfy the following requirements.

\begin{itemize}
	\item[1.]  For every $j\in[N]$, 
    the query $\Q{W}{S}_j$ to the server $j$ 
    is a (potentially stochastic) function of $W$, $S$, and $\Xj{S}$. We assume that the answer from the server
is a deterministic function of the query and the messages, i.e.,
\begin{equation}
\label{eq:answer}
\Hc{\A{W}{S}_j}{\Q{W}{S}_j,\Xj{1}, \Xj{2}, \cdots, \Xj{K}} = 0,
\end{equation} for all $W\in [K]$, $S\subseteq[K]\setminus\{W\}$, and $j\in[N]$. 

\item[2.] From the answers $A^{[W,S]}_1,\dots,A^{[W,S]}_N$ 
and the side information $X_S$, the user should be able to decode the desired message $X_W$, i.e.,
\begin{equation}
\label{eq:decodability}
\Hc{\Xj{W}}{\A{W}{S}_1, \cdots , \A{W}{S}_N ,\Xj{S}} = 0,
\end{equation} for all $W\in [K]$, $S\subseteq[K]\setminus\{W\}$.
 
\item[3.]  The PIR-SI scheme should guarantee privacy for the user by ensuring one of the following two  conditions, referred to as  $W$-privacy and $(W,S)$-privacy as defined below. 

\begin{definition}
$W$-\textbf{privacy}: Any server cannot infer any information about the demand index from the query, answer, and messages 
i.e., for all $j\in[N]$, we have
\begin{equation}
\label{eq:privacy}
\I{\mathbf{W}}{\Q{\mathbf{W}}{\mathbf{S}}_j,\A{\mathbf{W}}{\mathbf{S}}_j,\Xj{1}, \Xj{2},\cdots,\Xj{K}} = 0.
\end{equation} 
\end{definition}

\begin{definition}
$(W,S)$-\textbf{privacy}: Any server cannot infer any information about the demand index as well as the side information index set from the query, answer, and messages
i.e., for all $j\in[N]$, we have
\begin{equation}
\label{eq:privacy2}
\I{\mathbf{W},\mathbf{S}}{\Q{\mathbf{W}}{\mathbf{S}}_j,\A{\mathbf{W}}{\mathbf{S}}_j,\Xj{1},\Xj{2},\cdots,\Xj{K}} = 0.
\end{equation}
\end{definition}

We refer to a PIR-SI scheme preserving $W$-privacy or $(W,S)$-privacy as $W$-PIR-SI or $(W,S)$-PIR-SI scheme, respectively. 
\end{itemize}

The \emph{rate} of a a PIR-SI scheme is defined as the ratio of the message length ($t$ bits) to the total length of the answers (in bits) as follows:\footnote{Note that the download rate dominates the total communication rate for large enough messages.}
\begin{equation}
\label{eq:rate}
R = \frac{t}{\sum_{j=1}^{N}\Hp{\A{W}{S}_j}}.
\end{equation}
The \emph{capacity} of $W$-PIR-SI or $(W,S)$-PIR-SI problem, respectively denoted by $C_{W}$ or $C_{W,S}$, is defined as the supremum  of rates over all $W$-PIR-SI or $(W,S)$-PIR-SI schemes for a given $N$, $K$, and $M$, respectively.

\section{Main Results}
First, we summarize our main results for single server case in Theorems \ref{thm:single-server-PIR} and~\ref{thm:single-server-PIR2}, which characterize the capacity  $W$-PIR-SI and   $(W,S)$-PIR-SI, respectively. 
\begin{theorem}
\label{thm:single-server-PIR}
For the $W$-PIR-SI problem with $N=1$ server, $K$ messages, and side information size $M$, when the demand index $\mathbf{W}$ and the side information index set $\mathbf{S}$ are jointly distributed according to~\eqref{eq:dist}, 
the capacity is
\begin{equation}
\label{eq:capacity-partition}
C_{W} = \left\lceil \frac{K}{M+1}\right\rceil^{-1}.
\end{equation}
\end{theorem}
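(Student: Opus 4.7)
The plan is to match an achievability construction with an index-coding-based converse that uses the \emph{maximum acyclic induced subgraph} (MAIS) bound.

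For achievability, I would generalize the Partition-and-Code scheme from Example~\ref{ex:intro}. Let $L=\lceil K/(M+1)\rceil$. Given $(W,S)$, the user samples a partition $\mathcal{P}$ of $[K]$ into $L$ parts, uniformly at random among those in which $\{W\}\cup S$ appears as a part of size $M+1$; in the case $(M+1)\nmid K$ the scheme also decides, with calibrated probability, whether the ``short'' part of size $r=K\bmod(M+1)$ is the one containing $W$ (in which case the user uses only $r-1$ of its side-information messages). The query is $\mathcal{P}$ and the answers are the $L$ coordinate-wise $\GF{2^t}$-sums of the messages within each part, giving total download $Lt$ bits; the user cancels its known side-information symbols from its own part to recover $X_W$. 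For $W$-privacy, I would verify that $p(\mathcal{P}\mid W=w)$ is independent of $w$ by a counting argument: for any realized partition $\mathcal{P}^*$ and any $w$, there is exactly one value of $S$ consistent with the scheme, namely $P_w\setminus\{w\}$, where $P_w$ is the part of $\mathcal{P}^*$ containing $w$.

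For the converse, I would fix any $W$-PIR-SI scheme and condition on a query realization $Q^*$ with positive probability. The $W$-privacy condition forces $p(W=w\mid Q=Q^*)>0$ for every $w\in[K]$, so for each such $w$ there exists at least one side-information set $S_w\subseteq[K]\setminus\{w\}$ with $|S_w|=M$ such that $(W,S,Q)=(w,S_w,Q^*)$ has positive probability; by decodability the answer $A^*(X_1,\ldots,X_K)$ recovers $X_w$ from $X_{S_w}$. Thus $A^*$ is a valid index code for the digraph $G$ on $[K]$ in which the out-neighborhood of each vertex $w$ is $S_w$. The MAIS lower bound then gives $H(A^*)\ge \mathrm{MAIS}(G)\cdot t$, via the standard chain-rule argument that decodes $X_T$ from $A^*$ and $X_{[K]\setminus T}$ in reverse topological order whenever $T$ induces an acyclic subgraph.

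To finish, I would show $\mathrm{MAIS}(G)\ge L$ by a greedy procedure: iteratively pick any remaining vertex $v_i$ and delete $\{v_i\}\cup S_{v_i}$. Each step removes at most $M+1$ vertices, so the process runs for at least $L$ rounds; because $v_j\notin S_{v_i}$ whenever $j>i$, every within-set edge points ``backwards'' in the pick order, so reversing this order gives a valid topological order. Combining these steps yields $H(A^*)\ge Lt$ and hence $R\le 1/L$, matching the achievable rate. The main obstacle I anticipate is the privacy verification when $(M+1)\nmid K$, where the scheme must mix parts of two different sizes while keeping the marginal partition distribution independent of $W$; this requires carefully calibrating how often the user's part plays the shorter role, and is the one place where the analysis requires more than a one-line counting check.
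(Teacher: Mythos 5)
Your proposal matches the paper's proof essentially step for step: the converse reduces to an index-coding instance whose side-information digraph has uniform out-degree $M$ (the paper's Proposition~\ref{prop:necessity} and Lemma~\ref{lem:necessary-condition}), then applies the MAIS bound with exactly the greedy deletion argument of Lemma~\ref{lem:mais-lower-bound}; the achievability is the same Partition-and-Code scheme with a uniformly random partition whose parts are XOR'd by the server. The one place you correctly flag as needing more care, the privacy check when $(M+1)\nmid K$, is also the place where your ``exactly one value of $S$ consistent with $\mathcal{P}^*$ and $w$'' claim fails: when $w$ lands in the short part of size $r$, many side-information sets of size $M$ extend $P_w\setminus\{w\}$, and the parts of a full-sized block containing $w$ need not exhaust $S$. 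The paper resolves this by having the user first choose which part will contain $W$ with probability proportional to its size ($\tfrac{M+1}{K}$ for full parts, $\tfrac{K-(g-1)(M+1)}{K}$ for the short part) and, if the short part is chosen, placing $W$ plus a uniformly random $(r-1)$-subset of $S$ there; a direct calculation then gives $\mathbb{P}(\mathbf{W}=W\mid Q)=\tfrac{1}{K}$ for every $W$, which is the calibration your sketch anticipates but does not carry out.
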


Our proof for Theorem~\ref{thm:single-server-PIR} is based on two parts. We prove the converse in Section~\ref{sec:converse-partitioning} for any joint  distribution of $(\mathbf{W,S})$. Then, we construct an achievability scheme in Section~\ref{sec:achievability-partitioning} for the distribution given in~\eqref{eq:dist}. 


\begin{theorem}
\label{thm:single-server-PIR2}
For the $(W,S)$-PIR-SI problem with $N=1$ server storing $K$ messages 
and for any arbitrary joint distribution of the demand index $\mathbf{W}$ and the side information index set $\mathbf{S}$ such that the size of $\mathbf{S}$ is equal to $M$, the capacity is 
\begin{equation}
\label{eq:capacity-mds}
C_{W,S} = (K-M)^{-1}.
\end{equation}
\end{theorem}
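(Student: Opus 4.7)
The plan is to prove achievability and converse separately. For achievability I generalize the MDS scheme of Example~\ref{ex:intro}: the server applies a systematic $(2K-M, K)$ MDS code over $\mathbb{F}_{2^t}$ with generator $[I_K \mid P]$ to the messages and transmits only the $K-M$ parity symbols $(X_1,\ldots,X_K)\cdot P$. The user's query simply announces the size $M$ of its side-information set, so it is independent of $(W,S)$, and the answer is a deterministic function of $X_{[K]}$ alone. Hence $(\mathbf{W}, \mathbf{S})$ is independent of $(Q, A, X_{[K]})$ trivially, satisfying $(W,S)$-privacy. For decoding, the user's $M$ systematic side-information coordinates $X_S$ together with the $K-M$ received parity coordinates form $K$ codeword positions, which by the MDS property determine all messages, in particular $X_W$. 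The download cost is $(K-M) t$ bits, yielding rate $1/(K-M)$.

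For the converse the goal is $H(A) \geq (K-M) t$, and the main tool is to turn $(W,S)$-privacy into a universal decodability property for $A$: for every size-$M$ subset $S_0 \subset [K]$ and every $W \notin S_0$, the triple $(A, Q, X_{S_0})$ determines $X_W$. Privacy gives $(\mathbf{W}, \mathbf{S}) \perp (Q, A, X_{[K]})$, so the joint distribution of $(Q, A, X_{[K]})$ is the same conditional on any $(W_0, S_0)$ in the support. Combined with the per-$(W, S)$ decoding requirement $H(X_W \mid A^{[W, S]}, Q^{[W, S]}, X_S) = 0$, this yields $H(X_W \mid A, Q, X_{S_0}) = 0$ for every valid $(W, S_0)$. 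Fixing any $S_0$ and chaining over $W \notin S_0$ then gives $H(X_{[K] \setminus S_0} \mid A, Q, X_{S_0}) = 0$, from which $Kt = H(X_{[K]}) \leq H(X_{S_0}) + H(A, Q) = Mt + H(A, Q)$, so $H(A, Q) \geq (K-M)t$. Since the query alphabet size does not scale with $t$, $H(Q)$ is negligible for large $t$, and we conclude $H(A) \geq (K-M) t$, hence $R \leq 1/(K-M)$.

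The main obstacle is the closure step that upgrades the per-support decoding identity to all size-$M$ subsets $S_0$. For the full-support distribution~\eqref{eq:dist}, every pair $(W, S_0)$ with $W \notin S_0$ is in the support and the closure is immediate. For more general joint distributions one must transport the decoding identity to $(W, S_0)$ pairs outside the support, either by exploiting that $(\mathbf{W}, \mathbf{S}) \perp (Q, A, X_{[K]})$ or by directly invoking the universal decodability that the problem formulation imposes for every valid $(W, S)$ with $|S|=M$ and $W \notin S$. The cleanest packaging uses the index-coding connection highlighted in the introduction: $A$ must be a valid index code for the fully symmetric instance in which every size-$M$ subset $S_0$ is the side-information of a virtual user wanting each $X_W$ with $W \notin S_0$, and the corresponding MAIS-type converse evaluates exactly to $(K-M) t$, matching the MDS achievability.
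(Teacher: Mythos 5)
Your argument is essentially the paper's. The achievability is the same MDS PIR scheme (systematic $(2K-M,K)$ code, download the $K-M$ parities, query reveals only $M$). Your converse, though phrased as a direct entropy chain rather than an explicit reduction to index coding, is a transparent repackaging of the paper's Lemmas~\ref{lem:WSNecessaryCondition} and~\ref{lem:WSTransmissionBound}: the universal decodability property you derive from $(W,S)$-privacy (that $(A,Q,X_{S_0})$ determines $X_W$ for every size-$M$ subset $S_0$ and every $W\notin S_0$) is exactly the content of Lemma~\ref{lem:WSNecessaryCondition}, and your step of fixing one $S_0$ and chaining over $W\notin S_0$ to get $H(A)\geq(K-M)t$ is exactly the paper's Lemma~\ref{lem:WSTransmissionBound} restriction to a single-side-information instance $J'$; so the MAIS framing and your entropy computation amount to the same calculation. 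One small difference is that you bound $H(A,Q)$ and then argue $H(Q)$ is asymptotically negligible, whereas the paper works directly with $H(A^{[W,S]})$; both yield the same capacity. You are more explicit than the paper in flagging the delicacy of upgrading the per-support decodability to all $(W,S_0)$ pairs: the paper's Lemma~\ref{lem:WSNecessaryCondition} asserts this closure with an informal ``otherwise the server would know'' argument that is airtight only when the support of $p_{\mathbf{W},\mathbf{S}}$ is all pairs with $|S|=M$, $W\notin S$ (as for~\eqref{eq:dist}); your proposed alternative, invoking the for-all-$(W,S)$ decodability in the problem formulation, does not by itself close the gap either, since decodability of a different $A^{[W',S_0]}$ does not constrain the realized $A^{[W_0,S_0]}$ without privacy to glue them. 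So you have correctly identified where both your argument and the paper's are delicate for general joint distributions, even though neither fully resolves it beyond the full-support regime.
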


First, we  show that the capacity $C_{W,S}$ of the $(W,S)$-PIR-SI problem with $N = 1$ server, $K$ messages, and size information size $M$ is upper bounded by $(K-M)^{-1}$ for any  joint distribution of the side information index set and the demand index (see 
Section~\ref{sec:converse-mds}). Further, we  construct a  scheme based on maximum distance separable (MDS) codes, which achieves this bound for any arbitrary joint distribution of $(\mathbf{W},\mathbf{S})$ such that the size of $\mathbf{S}$ is equal to $M$ (see Section~\ref{sec:achievability-mds}).  

Next, we state our main result for multiple servers storing replicas of the database, which gives a lower bound on the capacity of $W$-PIR-SI problem based on an achievability scheme.
\begin{theorem}
\label{thm:multi-server-PIR}
For the $W$-PIR-SI problem with $N$ servers, each storing $K$ messages, and side information size $M$ such that $(M+1)\mid K$, when the demand index $\mathbf{W}$ and the side information index set $\mathbf{S}$ are jointly distributed according to~\eqref{eq:dist}, 
the capacity is lower bounded as
\begin{equation}
\label{eq:capacity-partition}
C_{W} \geq \left(1 + \frac{1}{N} + \cdots + \frac{1}{N^{\frac{K}{M+1}-1}}\right)^{-1}.
\end{equation}
\end{theorem}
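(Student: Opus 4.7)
The plan is to attain the claimed rate by reducing the multi-server PIR-SI problem to a standard (no-side-information) multi-server PIR problem on $K' \triangleq K/(M+1)$ ``virtual'' messages, and then applying the capacity-achieving scheme of Sun and Jafar~\cite{sun2016capacitynoncol} as a black box. The key observation motivating this approach is that the target rate $\left(1 + 1/N + \cdots + 1/N^{K/(M+1)-1}\right)^{-1}$ is precisely the Sun--Jafar PIR capacity for $N$ servers and $K'$ messages, so the single-server \emph{Partition and Code} idea from Theorem~\ref{thm:single-server-PIR} should compose cleanly with Sun--Jafar by using the side information only to collapse the $K$ original messages into $K'$ virtual ones.

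Concretely, the construction proceeds as follows. The user draws a partition $\mathcal{P} = \{P_1, \dots, P_{K'}\}$ of $[K]$ into $K'$ blocks of size $M+1$, uniformly at random from among those partitions for which $\{W\} \cup S$ is one of the blocks; denote this distinguished block index by $i^{*}$, so that $P_{i^{*}} = \{W\} \cup S$. The user sends $\mathcal{P}$ to all $N$ servers. Each server then forms the virtual messages $Y_i \triangleq \sum_{j \in P_i} X_j$ for $i \in [K']$; these are independent and uniform over $\GF{2^t}$ because the original $X_j$'s are. On top of transmitting $\mathcal{P}$, the user and servers execute the Sun--Jafar $N$-server PIR scheme on the virtual database $(Y_1, \dots, Y_{K'})$ with desired virtual index $i^{*}$. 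The user decodes $Y_{i^{*}}$ from the Sun--Jafar answers and extracts $X_W = Y_{i^{*}} - \sum_{j \in S} X_j$ using its side information. The total download is exactly that of the Sun--Jafar scheme for $K'$ messages of length $t$, yielding the claimed rate.

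For $W$-privacy, I would argue that the composite query $Q_j = (\mathcal{P}, q_j^{\mathrm{SJ}})$ received by server $j$ is distributed independently of $W$. The Sun--Jafar component $q_j^{\mathrm{SJ}}$ is, by the privacy guarantee of~\cite{sun2016capacitynoncol}, conditionally independent of $i^{*}$ given $\mathcal{P}$, so it suffices to establish that the marginal law of $\mathcal{P}$ does not depend on $W$. Given any partition $\mathcal{P}$ of $[K]$ into $(M{+}1)$-blocks and any $w \in [K]$, there is a unique $s = s_{w,\mathcal{P}}$ with $\{w\} \cup s$ a block of $\mathcal{P}$. A direct counting argument shows that the number of such partitions containing a specified $(M{+}1)$-subset as a block equals the constant $N_{\mathrm{part}} \triangleq (K-M-1)!/\bigl(((M+1)!)^{K'-1}(K'-1)!\bigr)$, independent of which subset. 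Combining this with the joint distribution~\eqref{eq:dist} and the rule $P(\mathcal{P} \mid W=w, S=s) = 1/N_{\mathrm{part}}$ when $\{w\}\cup s \in \mathcal{P}$, marginalization over $S$ yields $P(\mathcal{P} \mid W=w) = 1 / \bigl(\binom{K-1}{M} N_{\mathrm{part}}\bigr)$, a quantity independent of $w$. Hence $\mathcal{P} \perp W$, so $Q_j \perp W$; since the queries are generated independently of $X_{[K]}$ and the answers are deterministic functions of queries and messages, this implies $I(W; Q_j, A_j, X_{[K]}) = 0$.

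The main obstacle is the privacy verification: one must correctly marginalize over the random partition choice and over $S$ to show that the induced law on $\mathcal{P}$ is uniform over partitions of $[K]$ into $(M{+}1)$-blocks and thus independent of $W$, and one must also confirm that invoking Sun--Jafar on the $Y_i$'s does not create a side channel leaking $i^{*}$ (which follows from the Sun--Jafar privacy guarantee applied to the independent virtual database). Decodability, the rate computation, and the application of Sun--Jafar are otherwise routine.
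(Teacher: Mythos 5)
Your construction is essentially the paper's own proof: partition the $K$ messages into $K/(M+1)$ blocks of size $M+1$ with $\{W\}\cup S$ forming one block, expose the block sums as ``super-messages,'' and run Sun--Jafar on those super-messages, with privacy following because the partition law is independent of $W$ and Sun--Jafar hides the demanded super-message index. The one cosmetic difference is that the paper explicitly transmits a random permutation of the blocks, while you rely on a deterministic labeling of $\mathcal{P}$ together with the Sun--Jafar guarantee that $q_j^{\mathrm{SJ}}$ is independent of $i^{*}$ given the virtual database; both work (the paper's permutation is redundant once Sun--Jafar is invoked), and your explicit counting argument that $P(\mathcal{P}\mid W=w)$ is constant in $w$ is a nice piece of added rigor over the paper's briefer conditional-probability computation.
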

Our PIR scheme 
builds up on the scheme in~\cite{sun2016capacitynoncol}, which is for the case of no side-information.

\section{$W$-Privacy Problem}
\label{sec:IEEEproofs}


Our converse proofs for Theorems~\ref{thm:single-server-PIR} and~\ref{thm:single-server-PIR2} in the single-server case use the following simple yet powerful observation. 

\begin{proposition}
\label{prop:necessity}
Let $\A{W}{S}$ be an answer from the server that  satisfies the decodability requirement ~\eqref{eq:decodability} and the $W$-privacy requirement~\eqref{eq:privacy}. Then, the following two conditions hold:
\begin{enumerate}
\item For each message $X_i, i=1,\dots, K,$ there exists a subset  $\Xj{S_i}\subseteq\{\Xj{1},\cdots,\Xj{K}\} \setminus \Xj{i}$, with $|\Xj{S_i}| = M $, and a decoding function $D_{i}$ satisfying $D_i\left(\A{W}{S},\Xj{S_i}\right) = X_i$.
\item There exists a function $D_W$ such that $D_W\left(\A{W}{S},\Xj{S}\right) = X_W$. 
\end{enumerate}
\end{proposition}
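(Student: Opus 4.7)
The plan is to obtain Part~2 as an immediate restatement of the decodability requirement, and to derive Part~1 from $W$-privacy together with the standing assumption that $p_{\mathbf{W}}(i) > 0$ for every $i \in [K]$.

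Part~2 is immediate: condition~\eqref{eq:decodability} gives $\Hc{\Xj{W}}{\A{W}{S}, \Xj{S}} = 0$, so $\Xj{W}$ is a deterministic function of $(\A{W}{S}, \Xj{S})$; take $D_W$ to be that function.

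For Part~1, fix any realization $(W,S)$ with $p_{\mathbf{W},\mathbf{S}}(W,S) > 0$, any message tuple $(x_1,\ldots,x_K)$, and any answer value $a$ of $\A{W}{S}$ that arises with positive probability on this input. The $W$-privacy condition~\eqref{eq:privacy} makes $\mathbf{W}$ independent of $(\A{\mathbf{W}}{\mathbf{S}}, \Xj{1},\ldots,\Xj{K})$, so for every $i \in [K]$ the conditional probability
\[
\Pr\!\bigl(\mathbf{W} = i \;\big|\; \A{\mathbf{W}}{\mathbf{S}} = a,\, \Xj{1}=x_1,\ldots,\Xj{K}=x_K\bigr)
\]
equals $p_{\mathbf{W}}(i) > 0$. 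Hence the joint event $\{\mathbf{W}=i,\; \A{\mathbf{W}}{\mathbf{S}}=a\}$ has positive probability given the messages; since the answer is a deterministic function of the query and the messages, this forces the existence of a side information set $S_i \subseteq [K]\setminus\{i\}$ with $|S_i|=M$ and a realization of the internal query randomness for the pair $(i, S_i)$ that produces the same $a$ on the same $(x_1,\ldots,x_K)$.

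Applying~\eqref{eq:decodability} to this hypothetical execution yields a deterministic decoder $D_i$ with $D_i(\A{i}{S_i}, \Xj{S_i}) = \Xj{i}$, and since on this realization $\A{i}{S_i} = a = \A{W}{S}$, we conclude $D_i(\A{W}{S}, \Xj{S_i}) = \Xj{i}$, as required. The step most worth writing out carefully is the passage from ``$\Pr(\mathbf{W}=i,\, A=a) > 0$'' to ``existence of a compatible $S_i$'': it rests on $p_{\mathbf{S}|\mathbf{W}}(\cdot \mid i)$ being supported only on size-$M$ subsets disjoint from $\{i\}$ (which follows from the problem setup) together with the deterministic nature of the answer given the query. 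The full-support hypothesis on $p_{\mathbf{W}}$ is essential, for if $p_{\mathbf{W}}(i) = 0$ then $W$-privacy places no restriction on the answer when the demand is $i$, and Part~1 can fail at that index.
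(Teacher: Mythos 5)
Your proof is correct and follows essentially the same route as the paper's. Part~2 is identical (a direct reading of~\eqref{eq:decodability}), and for Part~1 the paper gives the same argument in terse contrapositive form (``if this were not the case, the server would know $X_i$ is not the demand, violating $W$-privacy''), whereas you unfold it into the explicit calculation $\Pr(\mathbf{W}=i \mid \A{\mathbf{W}}{\mathbf{S}}=a, X_1,\ldots,X_K) = p_{\mathbf{W}}(i) > 0$ and then trace back through the deterministic answer map~\eqref{eq:answer} to produce a compatible $(S_i, D_i)$; both hinge on the same full-support assumption on $p_{\mathbf{W}}$, which you correctly flag as essential.
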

\begin{proof}
The first condition is implied by the $W$-privacy requirement. Indeed, if this was not the case, then the server would know that message $X_i$ is not one of the messages requested by the user which, in turn, would violate the $W$-privacy condition (\ref{eq:privacy}). Note that the first condition holds under the assumption that $\mathbf{W}$ has a distribution such that $p_{\mathbf{W}}(W) > 0$ $\forall W\in[K]$.  

The second condition is implied by the decodability requirement.
\end{proof}

The above proposition enables us to show a relation of the PIR-SI problem with an instance of index coding with side information problem~\cite{BarYossef:IT:11,effros2015equivalence,el2010index}. We begin with briefly reviewing the index coding problem.   

\subsection{Index Coding problem}


Consider a server with $K$ messages $\Xj{1},\cdots, \Xj{K}$ of length $t$ with $\Xj{j}\in\{0,1\}^t$. Consider $L$ clients $R_1, \cdots, R_L$, $L\geq K$, where for each $i$, $R_i$ is interested in one message, denoted by $\Xj{f(i)}\in \{\Xj{i}\}$, and knows some subset $\Xj{S_i} \subset \{\Xj{i}\}$ of the other messages, referred to as its side information. 

An index code of length $\ell$ for this setting is a set of codewords in $\{0,1\}^{\ell}$ together with an encoding function $E:\{0,1\}^{tK} \rightarrow \{0,1\}^{\ell}$, and a set of $L$ decoding functions $D_1, \cdots, D_L$ such that $D_i\left(E\left(X_1,\cdots,X_K\right),X_{S_i}\right) = \Xj{f(i)}$ for all $i\in[L]$ and $[X_1, \cdots, X_K] \in \{0,1\}^{tK}$. We refer to $E\left(X_1,\cdots,X_K\right)$ as a {\it solution} to the instance of the index coding problem.

When $L = K$ and every client requires a distinct message, the side information of all the clients can be represented 
by a simple directed graph $G = \left(V,E\right)$, where $V = \{1,2,\cdots,K\}$ with the  vertex $i$ corresponding to the message $\Xj{i}$, and there is an arc $(i,j)\in E$ if $j \in S_i$. 
 We denote the out-neighbors of a vertex $i$ as $\mathcal{N}(i)$.

For a given instance of the index coding problem, the minimum encoding length $\ell$ as a function of message-length $t$ is denoted as $\beta_t$, and the {\it broadcast rate} is defined as in ~\cite{Alon:FOCS:08, Blasiak:IT:13}
\begin{equation}
\label{eq:broadcast-rate}
\beta = \inf_t \frac{\beta_t}{t} 
\end{equation}


\subsection{Converse for Theorem~\ref{thm:single-server-PIR}}
%
\label{sec:converse-partitioning}

The key step of the converse is to show that for any  scheme that satisfies the $W$-privacy constraint (\ref{eq:privacy}), the answer from the server must be a solution to an instance of the index coding problem that satisfies certain requirements as specified in the following lemma.

\begin{lemma}
\label{lem:necessary-condition}
For a $W$-PIR-SI scheme, for any demand index $W$ and side information index set $S$, the answer $\A{W}{S}$ from the server must be a solution to an instance of the index coding problem that satisfies the following requirements:
\begin{enumerate}
	\item The instance has the messages  $X_1, \cdots, X_K$;
	\item  There are $K$ clients such that each client wants to decode a distinct message from $X_1, \cdots, X_K$, and possesses a side information that includes $M$ messages;
	\item  The client that wants $X_W$ has the side information set $X_S$; for each other client the side information set has $M$ arbitrary messages from $X_1, \cdots, X_K$.
\end{enumerate}
%
%
\end{lemma}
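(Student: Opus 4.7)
The plan is to derive the lemma directly from Proposition~\ref{prop:necessity}, which does all the heavy lifting: it guarantees, for any scheme satisfying the decodability and $W$-privacy constraints, both a decoder for the demand using the true side information $X_S$ and, for every other message $X_i$, a decoder together with some length-$M$ subset $X_{S_i}$ not containing $X_i$. These decoders are exactly what is needed to exhibit the answer $A^{[W,S]}$ as an index code for a suitably chosen instance.

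Concretely, I would proceed as follows. First, apply Proposition~\ref{prop:necessity} to fix, for each $i\in[K]$ with $i\neq W$, a side information set $X_{S_i}\subseteq\{X_1,\dots,X_K\}\setminus\{X_i\}$ of size $M$ and a decoding function $D_i$ with $D_i(A^{[W,S]},X_{S_i})=X_i$; and retain the decoder $D_W$ with $D_W(A^{[W,S]},X_S)=X_W$ guaranteed by the decodability requirement. Second, construct the index coding instance: take the message set $\{X_1,\dots,X_K\}$; introduce $K$ clients $R_1,\dots,R_K$, where client $R_i$ demands $X_i$; assign to client $R_W$ the side information $X_S$ (of size $M$, lying outside $\{X_W\}$ by the assumption $W\notin S$), and to each other client $R_i$ the side information $X_{S_i}$ given above. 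This instance satisfies requirements (1)--(3) of the lemma by construction. Third, interpret $A^{[W,S]}$ itself as the broadcast transmission, with encoder $E(X_1,\dots,X_K)=A^{[W,S]}$ (well-defined as a function of the messages by \eqref{eq:answer}, after fixing the query $Q^{[W,S]}$); the decoders $D_1,\dots,D_K$ just collected then certify that every client can recover its demand, so $A^{[W,S]}$ is a valid index coding solution to this instance.

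The only subtlety worth flagging is the role of the $W$-privacy assumption. The existence of the auxiliary decoders $\{D_i\}_{i\neq W}$ is not a consequence of decodability alone; it comes from the privacy constraint, since if some $X_i$ were undecodable from $A^{[W,S]}$ together with any $M$-subset of the remaining messages, then observing $A^{[W,S]}$ the server could rule out the pair $(i,\cdot)$ as the true demand/side-information pair, contradicting \eqref{eq:privacy}. I expect this to be the main conceptual point, but since Proposition~\ref{prop:necessity} has already packaged it into a ready-to-use statement, the proof of the lemma itself should reduce to the bookkeeping of constructing the index coding instance and reading off the decoders, with no further technical obstacles.
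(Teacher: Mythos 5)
Your proof is correct and follows the same route as the paper: invoke Proposition~\ref{prop:necessity} to obtain the auxiliary side-information sets $X_{S_i}$ and decoders, build the $K$-client index coding instance with $R_W$ holding $X_S$ and each $R_i$, $i\neq W$, holding $X_{S_i}$, and read off $A^{[W,S]}$ as a feasible index code. The only difference is that you spell out the bookkeeping (well-definedness of the encoder via \eqref{eq:answer}, the role of $W$-privacy) that the paper dismisses with ``it is easy to verify.''
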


\begin{proof}

The sets $\Xj{S_i}$ mentioned in Proposition~\ref{prop:necessity} can be used to construct the following instance of the Index Coding problem.  The instance has the message set $X_1, \cdots, X_K$ and  $K$ 
clients $\{R_1,\cdots,R_k\}$ such that:
\begin{itemize}
	\item Client $R_W$ requires packet $X_W$ and has the side information set $\Xj{S}$;
	\item Each other client $R_i,\ i\neq W$ requires $X_i$ and has side information set $\Xj{S_i}$.
\end{itemize} 
It is easy to verify that the instance satisfies all the conditions stated in the lemma and that  $\A{W}{S}$ is the feasible index code for this instance.  
\end{proof}

Note that Lemma~\ref{lem:necessary-condition} shows that the answer $\A{W}{S}$ from the server must be a solution to an instance of the index coding problem in which the out-degree of every vertex in the corresponding side information graph $G$ is equal to $M$. 
Next, we lower bound the broadcast rate for an index coding problem with side information graph $G$ such that out-degree of every vertex in $G$ is $M$ as follows.





\begin{lemma}
\label{lem:mais-lower-bound}
Let  $G$ be a directed graph on $K$ vertices such that each vertex has out-degree $M$. Then, the broadcast rate of the corresponding instance of the index coding problem is lower bounded by $\lceil \frac{K}{M+1} \rceil$.
\end{lemma}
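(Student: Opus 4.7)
The plan is to invoke the standard MAIS (Maximum Acyclic Induced Subgraph) lower bound on the broadcast rate of an index coding problem, which asserts that $\beta \geq |A|$ for every vertex set $A \subseteq V(G)$ such that the induced subdigraph $G[A]$ is acyclic (this bound appears for example in \cite{BarYossef:IT:11}). It therefore suffices to exhibit an $A$ with $|A| \geq \lceil K/(M+1)\rceil$ whose induced subgraph is acyclic.

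I would construct such an $A$ by a simple greedy elimination. Initialize $V' := V$ and $A := \emptyset$; while $V'\neq\emptyset$, pick any $v\in V'$, add $v$ to $A$, then remove from $V'$ the vertex $v$ together with $\mathcal{N}(v)\cap V'$, the out-neighbors of $v$ still present. Since every vertex has out-degree exactly $M$, each iteration deletes at most $M+1$ vertices (the chosen vertex plus at most $M$ out-neighbors). Hence the procedure requires at least $\lceil K/(M+1)\rceil$ iterations before $V'$ becomes empty, so $|A|\geq \lceil K/(M+1)\rceil$.

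To verify that $G[A]$ is acyclic, label the vertices of $A$ in order of selection as $v_1,\ldots,v_{|A|}$. For any $i<j$, the vertex $v_j$ was still present in $V'$ when $v_i$ was chosen, which means $v_j$ was not among the vertices deleted together with $v_i$; equivalently $v_j \notin \mathcal{N}(v_i)$, so $(v_i,v_j)\notin E$. Therefore every arc of $G[A]$ goes from a later-selected vertex to an earlier-selected one, and the reverse order $v_{|A|},v_{|A|-1},\ldots,v_1$ is a topological order of $G[A]$. This certifies acyclicity, and combining with the MAIS bound yields $\beta \geq \lceil K/(M+1)\rceil$.

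The only subtle point I foresee is the acyclicity verification: one might naively hope that the greedy set $A$ is an independent set, but in general it is merely acyclic, since the elimination rule does not forbid arcs from later-picked vertices back to earlier-picked ones. Fortunately, acyclicity is exactly the property the MAIS bound needs, so this weaker conclusion is all that is required.
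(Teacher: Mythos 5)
Your proposal is correct and follows essentially the same approach as the paper: invoke the MAIS lower bound and construct an acyclic induced subgraph of size at least $\lceil K/(M+1)\rceil$ via greedy elimination of a vertex together with its out-neighbors. The paper's proof has the same structure, including the same observation that arcs in the constructed set can only point from later-picked vertices to earlier-picked ones, which certifies acyclicity.
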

\begin{proof}
For any side information graph $G$, the broadcast rate $\beta$ is lower bounded by the size of the maximum acyclic induced subgraph (MAIS) of $G$, denoted as $MAIS(G)$~\cite{Alon:FOCS:08,Arbabjolfae:17}.

We show that for any graph $G$ that satisfies the conditions of the lemma (i.e., the out-degree of each of the $K$ vertices of $G$ is $M$) it holds that 
$$MAIS(G)\geq \left\lceil \frac{K}{M+1} \right\rceil.$$


Specifically, we build an acyclic subgraph of $G$ induced by set $Z$ through the following procedure:


\begin{itemize}
	\item[] \hspace{-0.5cm} \textbf{Step 1.} Set $Z = \emptyset$ and  a  candidate set of vertices $V'=V$;
	\item[] \hspace{-0.5cm} \textbf{Step 2.} Add an arbitrary vertex $i\in V'$ into $Z$, i.e.,\\
$Z = Z \cup \{i\}$;
	\item[] \hspace{-0.5cm} \textbf{Step 3.} Set $V' = V' \setminus (\mathcal{N}(i) \cup \{i\})$;
	\item[] \hspace{-0.5cm} \textbf{Step 4.} There are two cases:
	\begin{itemize}
		\item[] \hspace{-0.5cm}\textbf{Case 1:} If $V' \neq \emptyset$, then repeat Steps 2-4.
		\item[] \hspace{-0.5cm}\textbf{Case 2:} If $V' = \emptyset$, then terminate the procedure and return $Z$.
	\end{itemize}
\end{itemize}



%
%

It is easy to see that the vertices in set $Z$ returned by the procedure induce an acyclic subgraph of $G$. If the vertices are ordered in the order they are added to $Z$, then there can only be an edge $(i,j)$ if $j$ was added to $Z$ before $i$. This implies that the subgraph induced by $Z$ cannot contain a cycle. 


Further, note that the set $Z$ contains at least $\lceil \frac{K}{M+1} \rceil$ vertices. At each removal step, there are at most $M+1$ vertices removed from $V$. 
Thus, the procedure iterates at least $\lceil \frac{K}{M+1} \rceil$ times, and in each iteration we add one vertex to $Z$. This implies that the size of $Z$ is at least  $\lceil \frac{K}{M+1} \rceil$.
\end{proof}


\begin{corollary}[Converse of Theorem~\ref{thm:single-server-PIR}]
For the $W$-PIR-SI problem with single server, $K$ messages, and side information size $M$, the capacity is at most ${\left\lceil \frac{K}{M+1} \right\rceil}^{-1}$.
\end{corollary}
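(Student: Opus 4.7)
The plan is to compose Lemma~\ref{lem:necessary-condition} with Lemma~\ref{lem:mais-lower-bound} and then convert the resulting MAIS bound into an information-theoretic lower bound on the entropy of the answer. Fix any single-server $W$-PIR-SI scheme and any realization $(W,S)$. By Lemma~\ref{lem:necessary-condition}, $\A{W}{S}$ is a valid index code for an instance whose side information graph $G$ has $K$ vertices, each of out-degree exactly $M$; by Lemma~\ref{lem:mais-lower-bound}, $G$ contains an acyclic induced subgraph on a vertex set $Z\subseteq[K]$ with $|Z|\geq \lceil K/(M+1)\rceil$.

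The key step is to pass from $|Z|\geq \lceil K/(M+1)\rceil$ to the entropy lower bound $\Hp{\A{W}{S}} \geq t\lceil K/(M+1)\rceil$, which is what the rate definition~\eqref{eq:rate} requires. I would order the vertices of $Z$ as $z_1,\ldots,z_{|Z|}$ along a topological order of the induced acyclic subgraph so that every out-neighbor of $z_i$ in $Z$ lies in $\{z_1,\ldots,z_{i-1}\}$. Conditioning on $\Xj{[K]\setminus Z}$ and decoding $X_{z_1}, X_{z_2},\ldots$ sequentially from $\A{W}{S}$ (at step $i$, client $z_i$'s side information set is contained in $\Xj{[K]\setminus Z}\cup\{X_{z_1},\ldots,X_{z_{i-1}}\}$ by the topological ordering), one obtains
\[
\Hp{\A{W}{S}} \;\geq\; \Hc{\A{W}{S}}{\Xj{[K]\setminus Z}} \;\geq\; \Hc{\Xj{Z}}{\Xj{[K]\setminus Z}} \;=\; t|Z|
\]
from the independence and uniformity of the messages and the fact that the messages indexed by $Z$ are then deterministic functions of $\A{W}{S}$ together with $\Xj{[K]\setminus Z}$.

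Substituting into~\eqref{eq:rate} with $N=1$ yields $R = t/\Hp{\A{W}{S}} \leq \lceil K/(M+1)\rceil^{-1}$ for every scheme, so $C_W \leq \lceil K/(M+1)\rceil^{-1}$. The main obstacle is the MAIS-to-entropy conversion sketched above: Lemma~\ref{lem:mais-lower-bound} is phrased for worst-case index-code length, while the capacity is defined via entropy, so the conversion must exploit the specific structure of the side information sets guaranteed by Lemma~\ref{lem:necessary-condition} to fix a consistent topological decoding order. Once that order is in place, the remaining substitutions are immediate.
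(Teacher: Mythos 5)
Your proof is correct and follows the paper's own route: compose Lemma~\ref{lem:necessary-condition} with Lemma~\ref{lem:mais-lower-bound} to lower-bound the answer for every realization of $(W,S)$, then plug into the rate definition~\eqref{eq:rate}. The topological-ordering argument you supply to convert the combinatorial MAIS bound into the entropy bound $\Hp{\A{W}{S}}\geq t\left\lceil K/(M+1)\right\rceil$ is exactly the step the paper's terse corollary proof leaves implicit (the paper speaks of "length" and then plugs into an entropy-based rate formula), so you have filled a real elision rather than taken a different path.
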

\begin{proof}
Lemmas~\ref{lem:necessary-condition} and \ref{lem:mais-lower-bound} imply that the length of the answer $A^{[W,S]}$ is at least $t\cdot{\left\lceil \frac{K}{M+1} \right\rceil}$ bits 
for any given $W$ and $S$. Then, by \eqref{eq:rate}, it follows that $R\leq  \left\lceil \frac{K}{M+1} \right\rceil^{-1}$. 	
\end{proof}

\subsection{Achievability for Theorem~\ref{thm:single-server-PIR}}
\label{sec:achievability-partitioning}
In this section, we propose a $W$-PIR-SI scheme for $N=1$ server, $K$ messages, and side information size $M$, which achieves the rate $\left\lceil \frac{K}{M+1}\right\rceil^{-1}$.  Recall that we assume that the distribution of the demand index $W$ and the conditional distribution of the side information index set $S$ given $W$ are given respectively in~\eqref{eq:WantSetDist} and~\eqref{eq:SideInfoDist}. 
We describe the proposed scheme, referred to as the {\it Partition and Code} PIR scheme, in the following. 


{\bf Partition and Code PIR Scheme:} Given $K$, $M$, $W$, and $S$, denote $g\triangleq\left\lceil \frac{K}{M+1} \right\rceil$. The scheme consists of the following three steps.

{\it Step 1.} The user creates a partition of the $K$ messages into $g$ sets. For the ease of understanding, we describe the special case of $(M+1)\mid K$ first. 

(a) Special case of $(M+1)\mid K$: Denote $P_{1}\triangleq W \cup S$. The user randomly partitions the set of messages $[K] \setminus P_{1}$ into $g-1$ sets, each of size $M+1$, denoted as $P_2,\dots,P_{g}$.

(b) General case: 
Let $P_1,\dots,P_{g}$ be a collection of $g$ empty sets. Note that, although empty at the beginning, once constructed, the sets $P_1,\dots,P_{g-1}$ will be of size $M+1$, and the set $P_g$ will be of size $K - (g-1)(M+1)$.
The user begins by assigning probabilities to the sets according to their sizes: the sets $P_1,\dots,P_{g-1}$ are each assigned a probability $\frac{M+1}{K}$, and the set $P_g$ is assigned a probability $\frac{K - (g-1)(M+1)}{K}$. Then, the user chooses a set randomly according to the assigned probabilities of the sets. 

If the chosen set is a set $P\in \{P_1,\dots,P_{g-1}\}$, then the user fills the set $P$ with the demand index $W$ and the side information index set $S$ of the user. Next, it fills the remaining sets choosing one index at a time from the set of indices of the remaining messages uniformly at random until all the message indices are filled. 

If the chosen set is the set $P_g$, then it fill $P_g$ with the demand index $W$, and fill the remaining \mbox{$K - (g-1)(M+1)-1$} places in the set $P_g$ with randomly chosen elements from the side information index set $S$. (Note that once $P_g$ is filled, it is possible that not all of the indices in the side information index set $S$ are placed in the set.) Next, fill the remaining sets by choosing one index at a time from the set of indices of the unplaced packets uniformly at random until all packet indices are placed.

{\it Step 2.} The user sends to the server a uniform random permutation of the partition $\{P_1,\cdots,P_g\}$, ie., it sends $\{P_1, \cdots, P_g\}$ in a random order. 

{\it Step 3.} The server computes the answer $\A{W}{S}$ as a set of $g$ inner products given by $\A{W}{S} = \{A_{P_1},\dots,A_{P_{g}}\}$, where $A_P = [X_1,\dots,X_K]\cdot \mathbf{1}_{P}$ for all $P\in \{P_1,\dots,P_{g}\}$.

Upon receiving the answer from the server, the user decodes $X_W$ by subtracting off the contributions of its side information $X_S$ from $A_{P}$ for some $P\in \{P_1,\dots,P_g\}$ such that $W\in P$.

\begin{example}
Assume that $K=8$ and $M=2$. Assume that the user demands the message $X_2$ and has two messages $X_4$ and $X_6$ as side information, i.e., $W=2$ and $S=\{4,6\}$. Following the Partition and Code PIR scheme, the user labels three sets as $P_1,P_2,$ and $P_3$, and assigns probability $\frac{3}{8}$ to each of the two sets $P_1$ and $P_2$, and assigns probability $\frac{2}{8}$ to the set $P_3$. Next, the user chooses one of these sets at random according to the assigned probabilities. Assume the user has chosen the set $P_3$. The user then places $2$ into the set $P_3$, and chooses another element from $\{4,6\}$ uniformly at random to place in $P_3$ as well. Say the user chooses $6$ from the set $\{4,6\}$, then the set $P_3$ becomes $P_3 = \{2,6\}$. Then the user fills the other sets $P_1$ and $P_2$ randomly to exhaust the elements from $\{1,2,3,5,7,8\}$. Say the user chooses $P_1 = \{1,7,8\}$ and $P_2 = \{3,4,5\}$. Then the user sends to the server a random permutation of $\{\mathbf{1}_{P_1},\mathbf{1}_{P_2},\mathbf{1}_{P_3}\}$ as the query $Q^{[2,\{4,6\}]}$. The server sends three coded packets back to the user: $Y_1 = X_1 + X_7 + X_8$, $Y_2 = X_3 + X_4 + X_5$, and $Y_3 = X_2 + X_6$. The user can decode for $X_2$ by computing $X_2 = Y_3 - X_6$. From the server's perspective the user's demand is in either $\{1,7,8\}$ or $\{3,4,5\}$ with probability $\frac{3}{8}$ each, or in $\{2,6\}$ with probability $\frac{2}{8}$. The probability $P_1$ (or $P_2$) contains $W$ is $\frac{1}{3}$, and the probability that $P_3$ contains $W$ is $\frac{1}{2}$. In either case, it follows that $\mathbb{P}(\mathbf{W}=W|Q^{[1,\{2,3\}]})=\frac{1}{8}=p_{\mathbf{W}}(W)$. 
\end{example}

In the following, we show that the Partition and Code PIR scheme satisfies the $W$-privacy requirement for the setting in which the user's demand index $W$ and side information index set $S$ (given $W$) are distributed according to~\eqref{eq:WantSetDist} and~\eqref{eq:SideInfoDist}, respectively.

\begin{lemma}[Achievability of Theorem~\ref{thm:single-server-PIR}]
\label{lem:WPIRAch-NonDivis}
Consider the scenario of a $W$-PIR-SI problem in which:
\begin{itemize}
\item The server has packets $\{X_1,X_2,...,X_K\}$;
\item There is one user with $|W|=1,|S|=M$ such that $0\leq M\leq K-1$;
\item The demand index $W$ and the side information index set $S$ (given the demand index $W$) follow the distributions given in \eqref{eq:WantSetDist} and \eqref{eq:SideInfoDist}, respectively.
\end{itemize}
In this scenario, the Partition and Code PIR scheme satisfies the $W$-privacy, and has rate $R = \left\lceil \frac{K}{M+1} \right\rceil^{-1}$.
\end{lemma}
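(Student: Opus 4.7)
The lemma has three ingredients to verify: the claimed rate $R = \lceil K/(M+1)\rceil^{-1}$, decodability, and $W$-privacy. The first two are immediate. The answer is a list of $g := \lceil K/(M+1)\rceil$ linear combinations of the messages in $\GF{2^t}$, each of length $t$ bits and (since the messages are i.i.d.\ uniform) of entropy $t$, giving $R = t/(gt) = 1/g$. For decodability, Step~1 places the demand index $W$ into a unique part $P$ whose remaining elements lie in $S$, so the user recovers $X_W$ by subtracting the known $X_{P\setminus\{W\}}$ from $A_P = \sum_{i\in P} X_i$.

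The substantive step is verifying $W$-privacy. Since $Q^{[W,S]}$ depends only on $W$, $S$, and the user's internal randomness (not on the message contents), and $A^{[W,S]}$ is a deterministic function of $Q^{[W,S]}$ and the messages, the privacy condition~\eqref{eq:privacy} reduces to $I(\mathbf{W}; Q^{[\mathbf{W},\mathbf{S}]}) = 0$. Together with the uniform prior~\eqref{eq:WantSetDist} on $\mathbf{W}$, this is equivalent to showing that, for every realization $\pi$ of the query --- which, since Step~2 shuffles the labels $P_1,\ldots,P_g$ uniformly, the server sees only as an unordered partition of $[K]$ with parts of sizes $M+1,\ldots,M+1,m$ --- the quantity $\Pr(Q=\pi \mid W=w)$ does not depend on $w\in[K]$.

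My plan is to compute $\Pr(Q=\pi \mid W=w) = \sum_{S} \Pr(S\mid w)\,\Pr(\pi \mid w,S)$ by splitting on the size of the unique part $P^*\in\pi$ containing $w$. Letting $m := K-(g-1)(M+1)\in\{1,\ldots,M+1\}$ denote the small-part size, either $|P^*|=M+1$ or $|P^*|=m$. In the former case $\pi$ can arise only from the big-part branch of Step~1 and only when $S=P^*\setminus\{w\}$, yielding a unique compatible $S$; in the latter (non-trivial only when $m<M+1$) $\pi$ can arise only from the small-part branch and only for $S\supseteq P^*\setminus\{w\}$, of which there are $\binom{K-m}{M-m+1}$. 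In each case I will express $\Pr(\pi\mid w,S)$ as the product of the branch probability ($\tfrac{(g-1)(M+1)}{K}$ or $\tfrac{m}{K}$), the factor $1/\binom{M}{m-1}$ for choosing the correct $m-1$ elements of $S$ for $P_g$ in the small branch, and the fraction of uniform labeled fillings of the remaining elements that realize the unordered partition $\pi\setminus\{P^*\}$, which carries a combinatorial factor of $(g-2)!$ or $(g-1)!$ from relabelings of the unused big parts.

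After multiplying by $\Pr(S\mid w)=1/\binom{K-1}{M}$ and simplifying with standard binomial identities, both cases should collapse to the same value
\begin{equation*}
\Pr(Q=\pi\mid W=w) \;=\; \frac{(g-1)!\,m!\,((M+1)!)^{g-1}}{K!},
\end{equation*}
which depends only on $K,M,m$ and not on $w$, establishing $W$-privacy. The divisible case $(M+1)\mid K$ admits only the first case and simplifies accordingly. The main obstacle is the bookkeeping that converts labeled Step~1 probabilities into probabilities of the post-permutation unordered partition --- in particular the $(g-2)!$ versus $(g-1)!$ factors for the big-part labels left unfilled in each branch --- but once this is set up carefully, matching the two cases reduces to a routine algebraic identity.
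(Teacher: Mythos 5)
Your proposal is correct and takes a genuinely different (and arguably more self-contained) route than the paper. The paper's proof works with the \emph{posterior}: it asserts, for each part of the query partition, that $\mathbb{P}(\mathbf{W}\in P_i\mid Q^{[W,S]})$ equals $|P_i|/K$ and that $\mathbb{P}(\mathbf{W}=W\mid\mathbf{W}\in P_i,Q^{[W,S]})$ equals $1/|P_i|$, then multiplies to get $1/K$ --- but it states these two conditional probabilities without deriving them from the scheme's sampling procedure. You instead compute the \emph{likelihood} $\Pr(Q=\pi\mid W=w)$ directly and show it is independent of $w$, which by Bayes (with the uniform prior on $\mathbf{W}$) is equivalent, and your preliminary reduction of~\eqref{eq:privacy} to $I(\mathbf{W};Q)=0$ via the independence of $(\mathbf{W},\mathbf{S},Q)$ from the messages and the determinism of $A$ given $(Q,X_{[K]})$ is sound. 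Your combinatorial bookkeeping checks out: in the case $|P^*|=M+1$ one gets $\frac{1}{\binom{K-1}{M}}\cdot(g-1)\cdot\frac{M+1}{K}\cdot\frac{(g-2)!\,((M+1)!)^{g-2}\,m!}{(K-M-1)!}=\frac{(g-1)!\,m!\,((M+1)!)^{g-1}}{K!}$, and in the case $|P^*|=m$ summing over the $\binom{K-m}{M-m+1}$ compatible $S$'s with the factor $\frac{m}{K}\cdot\frac{1}{\binom{M}{m-1}}\cdot\frac{(g-1)!\,((M+1)!)^{g-1}}{(K-m)!}$ collapses to the same value. One caveat worth flagging: when $(M+1)\mid K$ the scheme is described as a \emph{separate} procedure (always $P_1=W\cup S$ with no probabilistic branching), so the constant you obtain there is $\frac{M!\,(g-1)!\,((M+1)!)^{g-1}}{(K-1)!}$ rather than your general formula evaluated at $m=M+1$; it is still independent of $w$ (and one can verify it sums to $1$ over all type-$(M+1,\dots,M+1)$ partitions), so privacy holds, but your phrase ``simplifies accordingly'' slightly understates that the divisible case requires a parallel, not specialized, calculation. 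Your approach has the advantage of actually establishing the two conditional probabilities that the paper takes for granted, at the cost of heavier bookkeeping.
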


\begin{proof}
To show that the Partition and Code PIR scheme satisfies the $W$-privacy, it suffices to show that $$\mathbb{P}(\mathbf{W}=W|Q^{[W,S]})=p_{\mathbf{W}}(W).$$ 

We consider two cases as follows:
\begin{itemize}
\item[(i)] $W$ 
is in one of the sets in $\{P_1,\dots,P_{g-1}\}$. In this case, for every $i\in[g-1]$, we have
\begin{IEEEeqnarray}{rCl}
\mathbb{P}(\mathbf{W}\in P_{i}|Q^{[W,S]}) &=& \sum_{j\in P_i}\mathbb{P}(\mathbf{W} = j|Q^{[W,S]})\nonumber\\
&=& \frac{M+1}{K},\nonumber 
\end{IEEEeqnarray}
and $$\mathbb{P}(\mathbf{W}=W|\mathbf{W}\in P_i,Q^{[W,S]}) = \frac{1}{M+1}.$$ 
\item[(ii)] $W$ is the set $P_g$. In this case, $$\mathbb{P}(\mathbf{W} \in P_{g}|Q^{[W,S]})= \frac{K - (g-1)(M+1)}{K},$$ and $$\mathbb{P}(\mathbf{W}=W|\mathbf{W} \in P_{g},Q^{[W,S]}) = \frac{1}{K - (g-1)(M+1)}.$$ 
\end{itemize}
Thus, we have 
\begin{eqnarray*}
\mathbb{P}(\mathbf{W} = W|Q^{[W,S]})\hspace{65mm} \\  = \sum_{i = 1 }^{g} \mathbb{P}(\mathbf{W} = W | \mathbf{W} \in P_i,Q^{[W,S]})\mathbb{P}(\mathbf{W} \in P_i | Q^{[W,S]})\hspace{7mm}\\ 
= \frac{1}{K}. \hspace{77mm}
\end{eqnarray*}

To compute the rate of the scheme, note that 
\begin{eqnarray*} 
H(A^{[W,S]}) &=& H([A_{P_1},A_{P_2},\dots,A_{P_{g}}]) \\ &=&\sum_{P\in \{P_1,P_2,\dots,P_{g}\}} H(A_{P}) \\ &=&  t\times g,
\end{eqnarray*} 
where the equalities follow since the messages $X_j$'s (and hence the answers $A_P$'s) are independently and uniformly distributed. 
Thus, the Partition and Code PIR scheme has rate $$R = \frac{t}{t\times g} = \frac{1}{g}=\frac{M+1}{K}.$$ 
\end{proof}

\section{$(W,S)$-Privacy Problem}
\label{sec:IEEEproof-theorem-2}

In this section we consider $(W,S)$-privacy in the PIR-SI problem. We show the proof of the converse and the achievability for Theorem \ref{thm:single-server-PIR2} through a reduction to an index coding instance and an MDS coding scheme, respectively.

\subsection{Converse for Theorem~\ref{thm:single-server-PIR2}}
\label{sec:converse-mds}
When protecting the demand index and the side information index set of the user, the privacy constraint becomes $$I(\mathbf{W},\mathbf{S};Q^{[\mathbf{W},\mathbf{S}]},A^{[\mathbf{W},\mathbf{S}]},X_1,X_2,...,X_K) = 0.$$ For this case, a lower bound of $K-M$ on the number of transmissions can be shown. The proof of the converse in this case shows a necessary condition for privacy and a class of index coding problems that satisfy the necessary condition; and obtains a lower bound on the number of transmissions needed to solve the index coding problem that {belong to this class}. 

\begin{lemma}
\label{lem:WSNecessaryCondition}
For a $(W,S)$-PIR-SI scheme, for any demand index $W$ and side information index set $S$, the answer $\A{W}{S}$ from the server must be a solution to an instance of the index coding problem that satisfies the following requirements:
\begin{enumerate}
	\item The instance has the message set $X_1, \cdots, X_K$;
	\item  There are $L = (K - M)\binom{K}{M}$ clients such that each client wants to decode one message, and possesses a side information set that includes $M$ other messages;
	\item  The client that wants $X_W$ has the side information set $X_S$; for each $i\in[K], i\ne W$, for each $S_i\subset[K]\setminus\{i\}$ such that $|S_i| = M$, there exists a client that demands $X_i$ and possesses $\Xj{S_i}$ as its side information.
\end{enumerate}
\end{lemma}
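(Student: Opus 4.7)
The plan is to mirror the proof of Lemma~\ref{lem:necessary-condition} while strengthening Proposition~\ref{prop:necessity} to reflect that $(W,S)$-privacy imposes a substantially stronger constraint than $W$-privacy. Under $W$-privacy the server cannot rule out any single message as the demand, so a single decoder per message suffices; under $(W,S)$-privacy the server additionally cannot rule out any admissible side information index set, so a decoder is needed for every (demand index, side information set) pair, not just one per demand index. Once this enhanced decodability condition is in hand, assembling it into an index coding instance is routine.

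Concretely, I would first establish the following strengthened analog of Proposition~\ref{prop:necessity}: for every pair $(i,S_i)$ with $i\in[K]$ and $S_i\subseteq[K]\setminus\{i\}$, $|S_i|=M$, such that the joint distribution of $(\mathbf{W},\mathbf{S})$ assigns positive mass to $(i,S_i)$, there exists a decoding function $D_{i,S_i}$ satisfying $D_{i,S_i}(\A{W}{S},\Xj{S_i})=X_i$. The argument is by contraposition: if no such $D_{i^\star,S_i^\star}$ exists for some pair with positive prior mass, then from the server's viewpoint, upon observing $(\Q{W}{S},\A{W}{S},X_1,\dots,X_K)$, the event $(\mathbf{W},\mathbf{S})=(i^\star,S_i^\star)$ can be excluded, since a legitimate user in that scenario would be forced to recover $X_{i^\star}$ from $(\A{W}{S},\Xj{S_i^\star})$ via the decodability requirement \eqref{eq:decodability}. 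Consequently the posterior at $(i^\star,S_i^\star)$ differs strictly from the prior, contradicting the zero mutual information condition \eqref{eq:privacy2}.

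Next, I would assemble the index coding instance. Take the message set $\{X_1,\dots,X_K\}$, and for each pair $(i,S_i)$ with $i\in[K]$, $S_i\subseteq[K]\setminus\{i\}$, $|S_i|=M$, introduce a client demanding $X_i$ with side information $\Xj{S_i}$. The number of such pairs is $K\binom{K-1}{M}=(K-M)\binom{K}{M}=L$, which gives condition~(2) of the lemma. By the strengthened decodability condition just proved, $\A{W}{S}$ together with each client's side information recovers that client's demand, so $\A{W}{S}$ is a valid solution to this index coding instance. The distinguished client demanding $X_W$ with side information $\Xj{S}$ lies among them, giving condition~(3); condition~(1) is immediate from the construction.

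The main obstacle is the subtle dependence of the strengthened decodability statement on the support of the joint distribution of $(\mathbf{W},\mathbf{S})$: the contrapositive argument only rules out pairs that carry positive prior mass, and formally extracting a \emph{deterministic} decoder from the zero mutual information statement requires a careful measure-theoretic step since $\A{W}{S}$ is only guaranteed to be deterministic in $(\Q{W}{S},X_1,\dots,X_K)$ via \eqref{eq:answer}. I would handle this by conditioning on realizations of $(\Q{W}{S},\A{W}{S},X_1,\dots,X_K)$ and invoking that $I(\mathbf{W},\mathbf{S};\cdot)=0$ forces pointwise equality of the posterior and the prior on the support. Making the full-support assumption explicit — which is the regime in which the converse bound $K-M$ is tightest — lets the decoder count collapse cleanly to $L=(K-M)\binom{K}{M}$, matching the lemma's statement and feeding directly into the subsequent lower bound on the index coding rate for this class of instances.
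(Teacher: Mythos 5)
Your approach matches the paper's: strengthen the decodability observation to show that $(W,S)$-privacy forces a decoder $D_{i,S_i}$ for every admissible $(i,S_i)$ pair (otherwise the server could rule that pair out, violating \eqref{eq:privacy2}), then assemble those $L = (K-M)\binom{K}{M}$ decoders into the required index coding instance. Your extra care about restricting to pairs with positive prior mass is a legitimate refinement of a step the paper handles informally, but the core contrapositive argument and the construction of the instance are identical to the paper's proof.
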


\begin{proof}
Given a demand index $W$ and a side information index set $S$, let $\A{W}{S}$ be an answer from the server that satisfies the decodability requirement~\eqref{eq:decodability} and the $(W,S)$-privacy requirement~\eqref{eq:privacy2}. First, we note that the decodability requirement implies that there exists a function $D_{W,S}$ such that $D_{W,S}\left(\A{W}{S},\Xj{S}\right) = X_W$. Second, we note that the $(W,S)$-privacy requirement implies that for each message $X_i$ and every set $S_i\subseteq[K]\setminus\{i\}$ of size $M$, there exists a function $D_{i,S_i}$ satisfying $D_{i,S_i}\left(\A{W}{S},\Xj{S_i}\right) = X_i$. Otherwise, for a particular $\{i,S_i\}$, the server would know that the user cannot possess $\Xj{S_i}$ and demand $\Xj{i}$, which, in turn, would violate the $(W,S)$-privacy requirement~\eqref{eq:privacy2}.  

Now, consider an instance of the index coding problem satisfying the conditions stated in the lemma. Since  decoding functions exists for each client as argued above, $\A{W}{S}$ is a feasible index code for this instance.  
\end{proof}


Next, we give a lower bound on the broadcast rate for an instance satisfying the conditions in Lemma~\ref{lem:WSNecessaryCondition}.

\begin{lemma}
\label{lem:WSTransmissionBound}
For any instance of the index coding problem satisfying the conditions specified in Lemma~\ref{lem:WSNecessaryCondition}, the broadcast rate is at least $K - M$.
\end{lemma}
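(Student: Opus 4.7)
The plan is to invoke the same Maximum Acyclic Induced Subgraph (MAIS) bound used in the proof of Lemma~\ref{lem:mais-lower-bound}, namely that $\beta \geq \mathrm{MAIS}(G)$ for the side information graph $G$ of the underlying index coding instance. So the task reduces to exhibiting an acyclic induced subgraph on $K - M$ vertices in any instance that meets the conditions of Lemma~\ref{lem:WSNecessaryCondition}.

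Assume first that $M \geq 1$. I would pick any subset $T = \{i_1, i_2, \ldots, i_{K-M}\} \subseteq [K] \setminus \{W\}$, which is possible because $K - M \leq K - 1$. For each $j \in [K-M]$, condition 3 of the lemma gives me the freedom to select a client that demands $X_{i_j}$ with any side information set $S_{i_j} \subseteq [K] \setminus \{i_j\}$ of size $M$. I would choose $S_{i_j}$ to additionally avoid the later indices $\{i_{j+1}, \ldots, i_{K-M}\}$; this is feasible because the available pool $[K] \setminus \{i_j, i_{j+1}, \ldots, i_{K-M}\}$ has size $M + j - 1 \geq M$. With these choices, every out-edge from vertex $i_j$ that lands inside $T$ must point to some $i_{j'}$ with $j' < j$, so the reverse ordering $i_{K-M}, i_{K-M-1}, \ldots, i_1$ is a topological ordering of the induced subgraph on $T$, which is therefore acyclic. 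Hence $\mathrm{MAIS}(G) \geq K - M$, and $\beta \geq K - M$ follows.

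The boundary case $M = 0$ is handled separately: with empty side information, no out-edges exist, so the full graph on $K$ vertices is trivially acyclic and $\mathrm{MAIS}(G) = K = K - M$. The main subtlety is that the distinguished client demanding $X_W$ is rigidly constrained to possess side information $X_S$, whereas all other demands admit arbitrary size-$M$ side information sets by condition 3 of Lemma~\ref{lem:WSNecessaryCondition}; choosing $T \subseteq [K]\setminus\{W\}$ simply bypasses this rigid client, and the richness of condition 3 is precisely what allows the acyclic subgraph construction to succeed. No deeper obstacle arises beyond this careful bookkeeping.
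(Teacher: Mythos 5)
Your proof is correct, but it takes a more elaborate route than the paper's. The paper argues by a direct cut-set (sub-instance) bound rather than MAIS: it fixes the side-information set to the actual $S$ and observes that condition 3 of Lemma~\ref{lem:WSNecessaryCondition} guarantees, for every $i \in [K]\setminus S$, a client demanding $X_i$ with side information exactly $X_S$ (the distinguished client covers $i=W$). Restricting to these $K-M$ clients gives a sub-instance $J'$ of $J$; any answer solving $J$ lets the user holding $X_S$ recover all of $X_{[K]\setminus S}$, and since the messages are independent this forces $H\bigl(A^{[W,S]}\bigr) \geq (K-M)t$. That argument is shorter, needs no acyclicity bookkeeping, and covers $M=0$ without a separate case. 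Your MAIS-based construction is a legitimate alternative: you are in effect hand-building the same sequential-decoding chain and applying the MAIS bound to the sub-instance obtained by picking one client per element of $T$. One thing to tighten is the phrasing ``$\beta \geq \mathrm{MAIS}(G)$ for the side information graph $G$ of the underlying index coding instance'': the instance in Lemma~\ref{lem:WSNecessaryCondition} has $(K-M)\binom{K}{M}$ clients with many demanding the same message, so it does not possess a single side-information graph in the sense the paper defines for $L=K$ unicast instances. Your argument actually applies the MAIS/sequential-decoding reasoning to the $(K-M)$-client sub-instance you select, which is where the bound really lives, and that part is sound.
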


\begin{proof}
Let $J$ denote an instance of the index coding problem satisfying the conditions in Lemma~\ref{lem:WSNecessaryCondition}. Let $J'$ be an instance of the index coding problem with the $K$ messages $X_1,\cdots,X_K$ and $K-M$ clients. Each client has the side information $X_S$ and wants to decode one distinct message from $\{X_1,\cdots,X_K\}\setminus X_S$. Clearly, a solution to instance $J$ is also a solution to instance $J'$. Since the messages are independent, the broadcast rate for $J'$ is at least $K-M$, which completes the proof. 
\end{proof}



\begin{corollary}[Converse of Theorem~\ref{thm:single-server-PIR2}]
For the $(W,S)$-PIR-SI problem with $N=1$ server, $K$ messages, and side information size $M$, the capacity is at most $(K-M)^{-1}$.
\end{corollary}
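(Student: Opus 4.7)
The plan is to chain together Lemma~\ref{lem:WSNecessaryCondition}, Lemma~\ref{lem:WSTransmissionBound}, and the definition of rate in~\eqref{eq:rate}, mirroring the structure of the converse for Theorem~\ref{thm:single-server-PIR}. First, I would fix an arbitrary $(W,S)$-PIR-SI scheme and an arbitrary realization of the demand index $W$ and side information index set $S$, and invoke Lemma~\ref{lem:WSNecessaryCondition} to conclude that the single server's answer $\A{W}{S}$ must be a feasible index code for some index coding instance satisfying the three enumerated properties (in particular, it is a valid broadcast codeword for $(K-M)\binom{K}{M}$ clients, one for every admissible $(i, S_i)$ pair).

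Next, I would apply Lemma~\ref{lem:WSTransmissionBound} to that instance to obtain a lower bound of $K-M$ on its broadcast rate. Recalling from~\eqref{eq:broadcast-rate} that the broadcast rate is $\inf_t \beta_t/t$, this means that any index code for the instance, and in particular $\A{W}{S}$ itself, must have length at least $t(K-M)$ bits, where $t$ is the per-message length. Since $\A{W}{S}$ is a deterministic function of the query and the messages (which are uniform over $\GF{2^t}$), this gives the entropy bound $\Hp{\A{W}{S}} \geq t(K-M)$.

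Finally, I would substitute this into~\eqref{eq:rate} with $N=1$ to obtain
\[
R \;=\; \frac{t}{\Hp{\A{W}{S}}} \;\leq\; \frac{t}{t(K-M)} \;=\; (K-M)^{-1}.
\]
Because the above argument holds for every $(W,S)$-PIR-SI scheme regardless of the joint distribution of $(\mathbf{W},\mathbf{S})$ (the distributional assumptions are only used through Lemma~\ref{lem:WSNecessaryCondition}), taking the supremum over all schemes yields $C_{W,S} \leq (K-M)^{-1}$, which is the claim. I do not anticipate any genuine obstacle at the corollary level: all the conceptual work has been discharged into the two preceding lemmas, so the corollary is essentially a two-line assembly. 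The only thing I would double-check is that the $K-M$ lower bound on broadcast rate in Lemma~\ref{lem:WSTransmissionBound} is expressed per unit of message length, so that the bound carries over to entropy in bits cleanly when plugged into the rate expression.
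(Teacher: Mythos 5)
Your proposal is correct and follows the same two-step structure as the paper's own proof: invoke Lemma~\ref{lem:WSNecessaryCondition} to cast the answer $\A{W}{S}$ as an index code, apply Lemma~\ref{lem:WSTransmissionBound} to lower-bound its length (and hence entropy) by $(K-M)t$, and plug into~\eqref{eq:rate}. The extra remarks you add (that the argument is distribution-free and that the broadcast-rate bound is per unit message length) are sound sanity checks but do not change the route.
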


\begin{proof}
Lemmas~\ref{lem:WSNecessaryCondition} and~\ref{lem:WSTransmissionBound} imply that the length of the answer $\A{W}{S}$ is at least $(K-M)t$ for any given $W$ and $S$. Thus, by using~\eqref{eq:rate}, it follows that $R\leq (K-M)^{-1}$. 
\end{proof}

\subsection{Achievability for Theorem~\ref{thm:single-server-PIR2}}
\label{sec:achievability-mds}

In this section, we give a $(W,S)$-PIR-SI scheme based on a maximum distance separable (MDS) code that achieves the rate of $1/(K - M)$. We assume that $t\geq \log_2(2K-M)$. 

\textbf{MDS PIR Scheme:} Given a demand index $W$ and a side information index set $S$ of size $M$, the user queries the server to send the $K - M$ parity symbols of a systematic $(2K - M, K)$ MDS code over the finite field $\GF{2^t}$. We assume that  $t\geq\log_2(2K-M)$, or equivalently, $2^t \geq 2K - M$. Thus, it is possible construct a $(2K - M,K)$ MDS code over $\GF{2^t}$. The answer $\A{W}{S}$ from the server consists of the $K - M$ parity symbols. 

\begin{lemma}[Achievability of Theorem~\ref{thm:single-server-PIR2}]
The MDS PIR scheme satisfies the decodability condition in~\eqref{eq:decodability} and the $(W,S)$-privacy condition in~\eqref{eq:privacy2}, and it has the rate of $R = (K-M)^{-1}$.
\end{lemma}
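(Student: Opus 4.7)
The plan is to verify three properties of the MDS PIR scheme separately: decodability, $(W,S)$-privacy, and the claimed rate. The central observation that simplifies everything is that the query $Q^{[W,S]}$ is the \emph{same} for every $(W,S)$, namely a fixed request for the $K-M$ parity symbols of a fixed systematic $(2K-M,K)$ MDS code over $\GF{2^t}$. This renders the privacy analysis almost immediate and leaves only decodability and bit counting to check.

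For decodability, I would argue that after receiving the answer the user holds $K-M$ parity coordinates of a codeword of an $(2K-M,K)$ MDS code, together with the $M$ systematic coordinates indexed by $S$ that coincide with its side information $X_S$. This is a total of $K$ coordinates, and by the defining property of MDS codes every $K$ coordinates form an information set, so the entire codeword is uniquely determined; in particular $X_W$ is recovered, giving $\Hc{X_W}{A^{[W,S]},X_S}=0$ as required by~\eqref{eq:decodability}. The only side condition needed for the construction to make sense is the existence of a systematic $(2K-M,K)$ MDS code over $\GF{2^t}$, which is guaranteed by the hypothesis $t\geq\log_2(2K-M)$ (so that $2^t\geq 2K-M$); for concreteness one can take a Reed--Solomon code.

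For $(W,S)$-privacy, since $Q^{[W,S]}$ does not depend on $(W,S)$ and the answer $A^{[W,S]}$ is a deterministic function of $X_1,\dots,X_K$ alone (the parity symbols of the fixed MDS encoding applied to the messages), the tuple $(Q^{[\mathbf{W},\mathbf{S}]},A^{[\mathbf{W},\mathbf{S}]},X_1,\dots,X_K)$ is a function of $X_1,\dots,X_K$ only, and hence is independent of the random pair $(\mathbf{W},\mathbf{S})$ under the problem's assumption that the messages are independent of $(\mathbf{W},\mathbf{S})$. Consequently $\I{\mathbf{W},\mathbf{S}}{Q^{[\mathbf{W},\mathbf{S}]},A^{[\mathbf{W},\mathbf{S}]},X_1,\dots,X_K}=0$, which is exactly~\eqref{eq:privacy2}.

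For the rate, the answer consists of $K-M$ symbols of $\GF{2^t}$, so $\sum_{j=1}^{1}\Hp{A^{[W,S]}_j}\leq (K-M)t$; plugging into~\eqref{eq:rate} gives $R\geq 1/(K-M)$. Combined with the converse already established in Section~\ref{sec:converse-mds}, this matches the upper bound and yields $C_{W,S}=(K-M)^{-1}$. No step presents a real obstacle; the only subtlety worth flagging explicitly is the field-size condition ensuring existence of the MDS code, which is built into the scheme's assumption $t\geq\log_2(2K-M)$.
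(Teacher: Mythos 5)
Your proof is correct and follows the same overall route as the paper's: the decodability argument is identical, and the privacy argument rests on the same observation that the query and answer do not depend on the particular realization of $(W,S)$. Two points of comparison are worth flagging. First, your privacy argument is actually sharper than the paper's informal remark: by noting that the query is a constant and the answer a deterministic function of $X_1,\dots,X_K$ alone, you reduce \eqref{eq:privacy2} to the independence of the messages from $(\mathbf{W},\mathbf{S})$, making explicit an assumption the paper leaves implicit. Second, for the rate you only bound $H(A^{[W,S]})\le (K-M)t$ and then invoke the converse of Section~\ref{sec:converse-mds} to obtain the asserted equality $R=(K-M)^{-1}$ (strictly speaking, you must apply that converse to this particular scheme, not merely to the capacity), whereas the paper computes $H(A^{[W,S]})=(K-M)t$ exactly by arguing that the $K-M$ parity symbols of a systematic MDS code are jointly uniform when the messages are i.i.d.\ uniform --- the parity submatrix of the generator has full column rank, a standard consequence of the MDS property. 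Your version is logically valid, but the paper's proof of the equality is self-contained; if you wish to avoid the appeal to the converse, that joint-uniformity observation is the missing step.
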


\begin{proof}
(Sketch) For a $(2K - M, K)$ systematic MDS code, given the $K - M$ parity symbols and any $M$ out of the $K$ messages, the user can decode all of the remaining $K - M$ messages as the code is MDS. Thus, the user can recover its demanded message. 

To ensure the $(W,S)$-privacy, note that the query and the answer are independent of the particular realization of demand index $W$ and side information index set $S$, but only depend on the size $M$ of the side information index set. As the server already knows the size of the side information index set, it does not get any other information about $W$ and $S$ from the query and the answer. Thus, the MDS PIR scheme satisfies the $(W,S)$-privacy requirement.
To compute the rate, note that for any $W$ and $S$, the answer $\A{W}{S}$ of the MDS PIR scheme consists of $K-M$ parity symbols of a $(2K-M,K)$ systematic MDS code over $\GF{2^t}$. For an MDS code, any parity symbol is a linear combination of all the messages. Thus, as each message is distributed uniformly over $\GF{2^t}$ and the code operates over $\GF{2^t}$, every parity symbol is also uniformly distributed over $\GF{2^t}$. Further, since the messages are independent, the parity symbols are independent. Hence, we have $H(\A{W}{S}) = (K - M)t$. 
Therefore, the rate of the MDS PIR scheme is $R=(K-M)^{-1}$. 
\end{proof}

\section{$W$-Privacy for Multiple Servers}
\label{sec:multi-message-pir}

\label{sec:achievability-modified-Sun-Jafar}

In this section, we present a $W$-PIR-SI scheme, when data is replicated on multiple servers. The rate achieved by the proposed scheme gives a lower bound on the capacity of multiple-server $W$-PIR-SI problem.
Our scheme builds up on the scheme proposed by Sun and Jafar in~\cite{sun2016capacitynoncol}, which deals with the case of no side information ($M=0$). We refer to it as Sun-Jafar protocol. Next, we use an example to  describe this Sun-Jafar protocol.  The details can be found in~\cite{sun2016capacitynoncol}.

\begin{example} (Sun-Jafar Protocol \cite{sun2016capacitynoncol}) $N = 2$ servers, $K = 2$ messages, and $M = 0$, i.e., no side information. The protocol assumes that each of the messages is $t = N^K = 4$ bits long. 
For a message $X_{i}$, let $[{X}_{i,1},\cdots,{X}_{i,t}]$ be a uniform random permutation of its $t$ bits. The user chooses a random permutation of the bits of $X_1$, and an independent random permutation of the bits of $X_2$. Suppose that the user is interested in downloading $X_1$. Then, it requests the bits from the first server (S1) and the second server (S2) as given in Table~\ref{tbl:example}.

\begin{table}[!h]
\begin{center}
\begin{tabular}{|c|c|}
\hline
S1 & S2\\
\hline
${X}_{1,1}$
& ${X}_{1,2}$\\ 
${X}_{2,1}$ 
& ${X}_{2,2}$\\ 
${X}_{1,3} + {X}_{2,2}$ 
& ${X}_{1,4} + {X}_{2,1}$\\ 
\hline
\end{tabular}
\end{center}
\caption{Queries for the Sun-Jafar protocol when $N = 2$ servers, $K = 2$ messages, and no side-information, when the user demands $X_1$. Each message is formed of  $4$ bits. 
}
\label{tbl:example}
\end{table}
Note that the user can decode the four bits of $X_1$ from the answers it gets. To ensure  privacy, note that each server is asked for a randomly chosen bit of each message and a sum of different pair of randomly chosen bits from each message. Therefore, a server cannot distinguish about which message is requested by the user. 
\end{example}

Next, we give an example to outline our proposed scheme for multi-server PIR with side information before describing it formally.

\begin{example} 
(Multi-Server $W$-PIR-SI Scheme) $N = 2$ servers, $K = 4$ messages, and $M = 1$ message as side information. 
Our scheme assumes that each message is $t = N^{\frac{K}{M+1}} = 4$ bits long. The  demand is privately chosen by the user, uniformly at random. The side information set has size $M = 1$. It is chosen uniformly at random from the other messages, and is unknown to the servers. 

Consider an instance when the user demands $X_1$, and the side information index  set $S = \{2\}$. First step is that the user forms a partition of $[K]$ into $g = K/(M+1) = 2$ sets $\{P_1,P_2\}$, where $P_1 = \{1,2\}$, and $P_2 = \{3,4\}$.\footnote{The general procedure for forming the partition is elaborated in the formal description of the scheme.} Next, the user sends a random permutation of $\{P_1,P_2\}$ to both the servers. The user and the servers form two {\it super-messages} by taking the sum of the messages indexed by $P_1$ and $P_2$ as follows: $\hat{X}_1 = \Xj{1} + \Xj{2}$ and $\hat{X}_2 = \Xj{3} + \Xj{4}$. 
The last step is that the user and the servers apply the Sun-Jafar protocol using the two super-messages $\hat{X}_1$ and $\hat{X}_2$, such that the user can download $\hat{X}_1$. The form of the queries is given in Table~\ref{tbl:example}.



From the answers, the user obtains $\hat{X}_1$, from which it can decode the desired message $X_1$ using the side-information $X_2$. Note that the privacy property of the Sun-Jafar protocol guarantees that any DB cannot distinguish  which super-message is requested by the user. Since the desired message can be in either super-message, and in a super-message, any of the messages can be the demand, the privacy of the demand index is ensured. 
\end{example}

Note that in the above example the proposed scheme requires to download $6$ bits, achieving the rate of $2/3$. It is shown in~\cite[Theorem 1]{sun2016capacitynoncol} that  the capacity of PIR with $N$ servers and $K$ messages and no side information is $(1 + 1/N + \cdots + 1/N^{K-1})^{-1}$. Therefore, if the user attempts to download the demand without using its side information, then the capacity is $(1 + 1/N + 1/N^2 + 1/N^3)^{-1} = 8/15$, which is smaller than $2/3$. 




Next, we describe our $W$-PIR-SI scheme for $N$ servers storing identical copies of the $K$ messages, when the user has a side information set of size $M$. We assume that $(M+1)\mid K$, and the messages are $t = N^{K/(M+1)}$ bits long. Recall that, for a subset $S \subset [K]$, $\mathbf{1}_S$ denotes the characteristic vector of the set $S$. 
Let $g\triangleq\frac{K}{M+1}$.

{\bf  Multi-Server $W$-PIR-SI Scheme:}
Assume that each message is $t = N^{\frac{K}{M+1}}$ bits long.  The scheme consists of the following three steps.

{\it Step~1.} Given the demand index $W$ and the side information index set $S$, let $P_1 = W\cup S$. The user randomly partitions the set of messages $[K] \setminus P_{1}$ into $g - 1$ sets of size $M+1$ each, denoted as $\{P_2,\cdots,P_{g}\}$. 

{\it Step~2.} The user sends to all the servers a uniform random permutation of the partition $\{P_1, \cdots, P_g\}$, ie., it sends $\{P_1, \cdots, P_g\}$ in a random order. Then, the user and the servers form $g$ {\it super-messages} $\{\hat{X}_1,\dots,\hat{X}_g\}$, where $\hat{X}_i = [X_1,\dots,X_K]\cdot\mathbf{1}_{P_i}$ for $i\in[g]$.

{\it Step~3.} The user and the $N$ servers utilize the Sun-Jafar protocol with $g$ super-messages in such a way that the user can download the message $\hat{X}_1$. 

\begin{lemma}
\label{lem:WPIRAch-NonDivis}
Consider the scenario of a $W$-PIR-SI problem in which:
\begin{itemize}
\item The $N$ servers store identical copies of $K$ messages $\{X_1,X_2,...,X_K\}$;
\item There is one user with $|W|=1,|S|=M$ such that $0\leq M\leq K-1$;
\item The demand index $W$ and the side information index set $S$ (given the demand index $W$) follow the distributions given in \eqref{eq:WantSetDist} and \eqref{eq:SideInfoDist}, respectively.
\end{itemize}
In this scenario, the multi-server $W$-PIR-SI scheme satisfies the $W$-privacy, and has rate $$R = \left(1 + 1/N + \cdots + 1/N^{K/(M+1)-1}\right)^{-1}$$
\end{lemma}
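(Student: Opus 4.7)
The plan is to separate the analysis into a privacy argument and a rate calculation. For the privacy argument, I would decompose the query sent to server $j$ into two pieces: the permuted partition $\mathcal{P}=(T_1,\dots,T_g)$ of $[K]$ into $g$ blocks of size $M+1$ announced in Step~2, and the Sun--Jafar sub-query $Q^{SJ}_j$ generated in Step~3 on the $g$ super-messages $\hat{X}_1,\dots,\hat{X}_g$. Because the sub-query $Q^{SJ}_j$ and its answer are generated as in the no-side-information protocol of \cite{sun2016capacitynoncol}, with fresh randomness that is independent of $(\mathbf{W},\mathbf{S})$ given $\mathcal{P}$ and the super-messages, it suffices to show $I(\mathbf{W};\mathcal{P},Q^{SJ}_j,X_1,\dots,X_K)=0$ for each $j\in[N]$.

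First, I would show $P(\mathbf{W}=w\mid\mathcal{P})=1/K$ for every $w\in[K]$. The construction of $\mathcal{P}$ places $W\cup S$ as one block, draws an ordered partition of the remaining $K-M-1$ indices uniformly into $g-1$ blocks of size $M+1$, and then applies a uniformly random permutation of the $g$ blocks. Hence for any ordered partition $(T_1,\dots,T_g)$ of $[K]$ into $g$ blocks of size $M+1$ and any $(W,S)$ with $W\cup S$ equal to one of the $T_i$,
\begin{equation*}
P(\mathcal{P}=(T_1,\dots,T_g)\mid W,S) \;=\; \frac{1}{g\cdot N_{g-1}},
\end{equation*}
where $N_{g-1}=(K-M-1)!/((M+1)!)^{g-1}$, and this quantity does not depend on $(W,S)$. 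Combining with $P(W,S)=\frac{1}{K\binom{K-1}{M}}$ from \eqref{eq:dist}, Bayes's rule gives $P(\mathbf{W}=w\mid\mathcal{P})=1/K$ since for each $w$ there is exactly one compatible $S=T_{i(w)}\setminus\{w\}$, where $T_{i(w)}$ is the block of $\mathcal{P}$ containing $w$. Hence $\mathbf{W}\perp\mathcal{P}$.

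Next, by the privacy guarantee of the Sun--Jafar protocol applied to the $g$ super-messages, the marginal distribution of $Q^{SJ}_j$ at any single server is the same regardless of which of the $\hat{X}_1,\dots,\hat{X}_g$ is the target. Since $\mathbf{W}$ determines the target super-message only through the index $i(\mathbf{W})$ of the block of $\mathcal{P}$ that contains it, we obtain $\mathbf{W}\perp Q^{SJ}_j\mid\mathcal{P}$. Combined with $\mathbf{W}\perp\mathcal{P}$, this gives $\mathbf{W}\perp(\mathcal{P},Q^{SJ}_j)$. Because $\mathbf{W}$ is independent of the messages (queries are generated from $(\mathbf{W},\mathbf{S})$ only), and the answer $A^{[\mathbf{W},\mathbf{S}]}_j$ is a deterministic function of the query and messages, the full $W$-privacy condition \eqref{eq:privacy} follows.

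For the rate, the Sun--Jafar capacity result \cite[Theorem~1]{sun2016capacitynoncol} says that retrieving one out of $g$ messages of $t$ bits from $N$ replicated, non-colluding servers requires total download $t(1+1/N+\cdots+1/N^{g-1})$. Our scheme uses that protocol with $g=K/(M+1)$ super-messages each of length $t=N^{g}$ bits, so the total number of bits downloaded is $t\bigl(1+\frac{1}{N}+\cdots+\frac{1}{N^{g-1}}\bigr)$. Since the user recovers $\hat{X}_1$ and then subtracts off $X_S$ to obtain the demand of length $t$, the rate is $R=\bigl(1+\frac{1}{N}+\cdots+\frac{1}{N^{K/(M+1)-1}}\bigr)^{-1}$, as claimed. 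The main subtle point is the first step: one must verify that randomizing $P_1=W\cup S$ into a uniformly random position and independently randomizing the other blocks renders the announced $\mathcal{P}$ statistically independent of $\mathbf{W}$; once this is in place, everything else is a direct appeal to Sun--Jafar.
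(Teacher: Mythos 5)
Your proof is correct and follows essentially the same approach as the paper: the rate is inherited directly from the Sun--Jafar protocol on $g=K/(M+1)$ independent super-messages, and privacy is established by combining the Sun--Jafar privacy guarantee (so the queried super-message index reveals nothing) with the observation that the permuted partition $\mathcal{P}$ hides which block contains $W$. The one place you go beyond the paper's exposition is the explicit Bayes-rule computation showing $P(\mathbf{W}=w\mid\mathcal{P})=1/K$ via the conditional uniformity of $\mathcal{P}$ given $(W,S)$; the paper asserts the analogous facts $\mathbb{P}(\mathbf{W}\in P_i\mid Q^{[W,S]})=\tfrac{M+1}{K}$ and $\mathbb{P}(\mathbf{W}=W\mid\mathbf{W}\in P_i,Q^{[W,S]})=\tfrac{1}{M+1}$ more informally, so your version tightens a step that the paper leaves to the reader.
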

\begin{proof}
First, note that since the messages $\{X_1,\dots,X_K\}$ are uniform and independent, the super-messages $\{\hat{X}_1,\dots,\hat{X}_K\}$ are uniform and independent as well. Therefore, the rate of the scheme is that of the Sun-Jafar protocol for $N$ servers and $\frac{K}{M+1}$ messages,  which is $\left(1 + 1/N + \cdots + 1/N^{K/(M+1)-1}\right)^{-1}$, see~\cite[Theorem 1]{sun2016capacitynoncol}. 

To prove the privacy, we note that, since the Sun-Jafar protocol protects the  privacy of the demanded super-message, \ie, any server cannot have any information about which super-message the user is trying to download. Therefore, from the perspective of each server, every super-message is  equally likely to include the demanded message in the linear combination. Further, the demanded message can be any one of the $M+1$ messages in a super-message. In other words, we have
$$\mathbb{P}(\mathbf{W}\in P_i \mid \Q{W}{S}) = \frac{M+1}{K}, \quad \forall i\in[g],$$
and\\ 
$$ \mathbb{P}(\mathbf{W} = W \mid \mathbf{W}\in P_i,\Q{W}{S}) = \frac{1}{M+1},\quad\forall i\in[g].$$
Hence, we have
$$\mathbb{P}(\mathbf{W} = W|Q^{[W,S]})  = \frac{1}{K}.$$
\end{proof}

\section{Conclusion}
\label{sec:conclusion}

In this paper we considered the problem  of 
Private Information Retrieval (PIR) with side information, in which the user has {\it a priori}  a subset of the messages at the server obtained from other sources. The goal of the user is to download a message, which  is not in its side information, from the server while satisfying a certain privacy constraints. We consider two privacy requirements: $W$-privacy in which the user wants to protect the identity its demand  (i.e., which message it wants to download), and $(W,S)$-privacy in which the user wants to protect the identity of the demand and  the side information  jointly. First, we focus on the case of single server (i.e., single database). We establish the PIR capacity for $(W,S)$-privacy for arbitrary distribution of the demand index $W$ and the side information index set $S$. 
In the case of $W$-privacy, we establish the  PIR capacity  for the uniform distribution. Second, we extend our PIR scheme for $W$-privacy to the case of multiple servers (multiple copies of the database). Our scheme for the multiple servers uses ideas from the single server scheme in conjunction with the no-side-information scheme of Sun and Jafar in~\cite{sun2016capacitynoncol}. The multi-server capacities of PIR with side information under the  $W$-privacy and $(W,S)$-privacy constraints remain open.


\bibliographystyle{IEEEtran}
\bibliography{PIR_salim,coding1,coding2,pir_bib}

\end{document}